\documentclass[11pt, a4paper]{article}

\usepackage[a4paper,margin=2.5cm]{geometry}
\usepackage[utf8]{inputenc}
\usepackage{booktabs}
\usepackage[small]{caption}
\usepackage{lmodern}

\usepackage{graphicx}
\usepackage{tikz}
\usepackage{algorithm}
\usepackage{algpseudocode}
\usepackage{xspace}
\usepackage{mathtools}
\usepackage[super]{nth}
\usepackage{amsmath}

\usepackage{amsthm}
\usepackage{amsfonts}
\usepackage{amssymb}
\usepackage{thm-restate}
\usepackage{microtype}
\usepackage{hyperref}
\usepackage{enumitem}
\usepackage{multirow}
\DeclareCaptionFont{xbf}{\bfseries\boldmath}
\usepackage{tablefootnote}
\usepackage{todonotes}

\usepackage{svg}

\usepackage[font=scriptsize]{caption}
\usepackage{comment}

\usepackage[sort&compress,numbers]{natbib}
\usepackage{cleveref}
\usepackage{array}
\newcolumntype{x}[1]{>{\centering\arraybackslash\hspace{0pt}}p{#1}}

\algdef{SE}[SUBALG]{Indent}{EndIndent}{}{\algorithmicend\ }%
\algtext*{Indent}
\algtext*{EndIndent}

\algdef{SE}{Upon}{EndUpon}[1]{\textbf{upon} \(\mbox{#1}\) \textbf{do}}{\textbf{end upon}}%

\theoremstyle{plain}
\newtheorem{theorem}{Theorem}[section]
\newtheorem{definition}[theorem]{Definition}
\newtheorem{lemma}[theorem]{Lemma}
\newtheorem{corollary}[theorem]{Corollary}

\newcommand{\qedClaim}{\hfill \ensuremath{\Box}}

\newcommand{\safe}{\ensuremath{\mathrm{Safe}}\xspace}
\newcommand\ballOf[1]{\ensuremath{\mathrm{Ball( \allowbreak\mathnormal{#1})}}\xspace}
\newcommand{\convexHull}{\ensuremath{\mathrm{Conv}}\xspace}
\newcommand{\radius}{\ensuremath{\mathrm{Rad}}\xspace}
\newcommand{\ballMid}{\ensuremath{\mathrm{BallMidpoint}}\xspace}
\newcommand{\ballMidOf}[1]{\ensuremath{\mathrm{BallMidpoint(\allowbreak\mathnormal{#1})}}\xspace}

\newcommand{\Rho}{\mathrm{P}}
\newcommand{\seb}{smallest enclosing ball\xspace}

\newcommand{\mytitle}[1]{
    \begingroup
    \fontsize{16pt}{18pt}\fontseries{bx}\selectfont \centering #1 \par  
    \endgroup
}
\newcommand{\myauthors}[1]{
    \begingroup
    \fontsize{12pt}{14pt}\fontseries{bx}\selectfont \centering #1 \par  
    \endgroup
}

\newcommand{\mykeywords}[1]{
    \begingroup
    \fontsize{10pt}{12pt}\selectfont \textbf{Keywords.} #1   
    \endgroup
}

\begin{document}
\mytitle{Faster Convergence of Multidimensional Approximate Agreement via Smallest Enclosing Balls}

\bigskip
\bigskip

\myauthors{Darya Melnyk\footnote{\texttt{melnyk@tu-berlin.de}, TU Berlin, Germany}}

\bigskip
\bigskip

\begin{abstract}

This work considers the multidimensional approximate agreement problem. In this problem, $n$ parties in a distributed system, up to $t$ of which may be corrupted by a Byzantine adversary, need to output vectors that are close to each other and that lie inside the convex hull of all non-corrupted input vectors. We assume that nodes communicate in a fully-connected authenticated network and analyze synchronous and asynchronous communication models. The focus of this work is on the contraction factor of approximate agreement protocols.

The first multidimensional approximate agreement protocols had a contraction rate of $1-1/n$(VG, PODC'13) and $\sqrt[d]{1/2}$(MH, STOC'13). While a rate below $1$ is sufficient for convergence, it is not sufficient for practical applications. To date, the best known convergence rate of approximate agreement algorithms is $\sqrt{7/8}\approx0.935$ (FN, DISC'18), which is achieved through the MidExtremes protocol. This stands in contrast to the lower bound on the convergence rate in the $1$-dimensional setting, which is $1/2$. 

In this work, we propose \ballMid\ -- a novel approximate agreement protocol with a contraction rate of $1/\sqrt{2}\approx 0.707$ in the synchronous and the asynchronous communication models. This algorithm satisfies the optimal resilience under convex validity. The presented contraction rate is achieved by choosing the midpoint of the smallest enclosing ball of the so-called local safe areas, and it is tight for the presented algorithms. Similar to (FN, DISC'18), our algorithm is coordinate-free, and the point inside the safe area can be computed efficiently. This work presents the first multidimensional approximate agreement protocol where the convergence rate is closer to the best known lower bound rather than the upper bound of $1$.

\end{abstract}

\mykeywords{vector consensus, smallest enclosing ball, convex validity, $1$-center}

\section{Introduction}

Many modern distributed applications handle multidimensional data, such as machine learning~\cite{10.1145/3616537}, collaborative learning~\cite{NEURIPS2021_d2cd33e9}, distributed optimization~\cite{10.1145/3465084.3467902}, robot gathering~\cite{PATTANAYAK2019145}, and large-scale elections~\cite{7474137}. In such systems, agreement protocols can be used to aggregate the local vectors into one global output. Exact agreement protocols are, however, either slow (the number of rounds needed is linear in the number of permitted failures), or even impossible (such as asynchronous deterministic agreement with one crash failure). This motivates using approximate agreement algorithms in practical applications. Approximate agreement allows nodes to converge to a common output. The application can thereby define the accuracy of such algorithms, i.e., how close the final vectors should be. In the particular instance of collaborative learning, the problem is equivalent to approximate agreement. 

To make multidimensional approximate agreement algorithms practical, several properties need to be satisfied: the quality of the agreement vector, the contraction rate, efficient local computability, and resilience. For example, the algorithm that applies one-dimensional approximate agreement in every coordinate is efficient, it has optimal resilience and an optimal contraction rate of $1/2$. However, this algorithm is dependent on a fixed coordinate system, and the output vector is only guaranteed to be inside a coordinate-parallel box around the non-corrupted input vectors. In~\cite{mendes2015multidimensional}, algorithms have been proposed that focus on providing a vector of better quality. In particular, they make sure that the output vectors are inside the convex hull of all non-corrupted input vectors (convex validity). Unfortunately, such algorithms suffer from a resilience that depends on the dimension of the vectors~\cite{mendes2015multidimensional}. Moreover, the best-known contraction rate of such algorithms is $\sqrt{7/8}\approx0.935$, which is still close to $1$, and thus makes the algorithms impractical.

In this work, we take a first step towards making multidimensional approximate agreement algorithms that satisfy convex validity more practical. We propose to use the midpoint of the smallest enclosing ball of the safe area to choose a better vector for the following round. With this strategy, we receive an algorithm with a contraction rate of $1/\sqrt{2}$, and thus get closer to the best known lower bound of $1/2$. This makes the proposed algorithm more suitable to practical applications and mitigates one of its previous drawbacks.

\subsection{Our Contribution}

In the following, we summarize the contributions of our paper:
\begin{itemize}
    \item We provide the first multidimensional Byzantine approximate agreement algorithm under convex validity for the synchronous all-to-all communication model with authenticated channels that achieves a contraction rate of $\frac{1}{\sqrt{2}}$. This algorithm is based on computing the midpoint of the smallest enclosing ball of the safe area, called \ballMid. This algorithm satisfies the optimal resilience bound of $n>\max(3,d+1)t$, where $d$ is the dimension of the input and $t$ is an upper bound on the number of corrupted nodes.
    \item We extend our results to asynchronous communication, showing that a contraction of $\frac{1}{\sqrt{2}}$ can also be achieved in the asynchronous case if $n>\max(3,d+2)t$.
    \item We show that the contraction rate presented in this paper is asymptotically tight for the synchronous and the asynchronous \ballMid algorithms. To this end, we present a worst-case input and Byzantine attack for each communication model. In this construction, the contraction factor remains $\frac{1}{\sqrt{2}}-\delta$ for each iteration of the \ballMid algorithm, where $\delta$ is an arbitrarily small constant.
    \item In the case $d=1$, our algorithms are equivalent to the MidExtremes algorithm, as the diameter of the points and diameter of the ball are the same in $1$D, and thus have an optimal contraction rate of $1/2$.
\end{itemize}

In Table~\ref{tab:comparison-prior-work}, we compare the \ballMid to other choices of a point inside the safe area.

\begin{table}[tbh]
\centering
\begin{tabular}{ l | c c x{2cm} x{2cm} } 
& \ballMid~(this work)& MidExtremes~\cite{fugger_et_al:LIPIcs.DISC.2018.27} & MH~\cite{10.1145/2488608.2488657} & VG~\cite{10.1145/2484239.2484256}\\
\hline
contraction rate & $\frac{1}{\sqrt{2}}$ & $\sqrt{\frac{7}{8}}$   & $\sqrt[d]{\frac{1}{2}}$ & $1-\frac{1}{n}$ \\
coordinate-free & yes & yes& no& yes\\
local time & $O(nd^{O(d)})$\tablefootnote{This result is due to~\cite{313559.313770}. There also exists a randomized algorithm with an expected runtime of $O(d^2n+2^{O(\sqrt{d\log d})})$~\cite{10.1145/201019.201036,10.1145/142675.142678}. For a fixed $d$, \ballMid is comparable to other choices of a point in the safe area.} & $O(n^2d)$ & $O(nd)$ & $O(nd)$ \\
\end{tabular}
\caption{Comparison of the results in this paper to prior work. Note that we added the hardness of computing the midpoint of the smallest enclosing ball for a more complete comparison. However, this computation step is not decisive compared to computing the safe area.}
\label{tab:comparison-prior-work}
\end{table}

\subsection{Technical Novelty}

In this work, we provide a new algorithm to compute a point in the convex hull of all correct nodes. This algorithm is based on the original algorithm by Mendes et al.~\cite{mendes2015multidimensional}, where the nodes locally compute a point inside the safe area. In our setting, nodes compute a point inside the intersection of convex hulls of subsets of $n-t$ received vectors. It has later been proven that choosing a point inside the safe area is necessary to satisfy convex validity~\cite{CAMBUS2026116044}. Thus, in our algorithm, the nodes locally compute the safe areas and choose the midpoint of the smallest enclosing ball of the safe area as the input for the next round. Here, the smallest enclosing ball is a fully-dimensional ball of smallest radius that still contains all vectors in a given set, such as the safe area.

The new algorithm idea does not directly imply the improved convergence rate. Additionally, we rely on the property that all safe areas must intersect. This contrasts with previous literature, where only a pairwise intersection of safe areas is used to show convergence. We show that, under synchronous communication, the safe areas must intersect in one point if the nodes use reliable broadcast for communication. To show that all safe areas intersect, we define a global view which contains input vectors for all nodes. This assumption is justified when reliable broadcast is used for communication, as at most one vector will be accepted for any corrupted node. We then show that the safe areas of all local views must contain the global safe area. This proof is based on the fact that each node locally maintains a subset of a global view, and computes the safe areas with potentially fewer sets. 

To show convergence, we adjust the definition of approximate agreement. Instead of arguing about convergence using the diameter of the vectors, we prove convergence using the radius of the smallest enclosing ball containing all non-corrupted vectors. We show that the radius of this ball shrinks by a factor of $\frac{1}{\sqrt{2}}$ in every iteration. This is done by carefully choosing a point in the intersection of all safe areas and upper bounding the distance from this point to the midpoints of the smallest enclosing balls of the safe areas.  

Under asynchronous communication, the previous argument used in the synchronous case would lead to reduced resilience. We thus rely on Gather~\cite{gather-original, gather-blog} to establish reliable communication between the nodes, such that all nodes accept the vectors of a common core of $n-t$ nodes in their local views. The common core may contain up to $t$ corrupted vectors. Using this communication primitive, we can define a global view based on the common core and show that each locally computed safe area will contain the global safe area as a subset.

Finally, we emphasize that previously used functions, such as MidExtremes, cannot reach the same contraction bound when using the assumption that the intersection of all safe areas is non-empty. Figure~\ref{fig:one-iteration-midextremes} shows an example where the contraction factor of MidExtremes is $\frac{\sqrt{3}}{2}\approx 0.866 > \frac{1}{\sqrt{2}}$ in one iteration, while the contraction factor of \ballMid is $\frac{\sqrt{3}}{3}\approx 0.577$ in the same example. 

\begin{figure}
    \centering
    \includegraphics[width=0.6\textwidth]{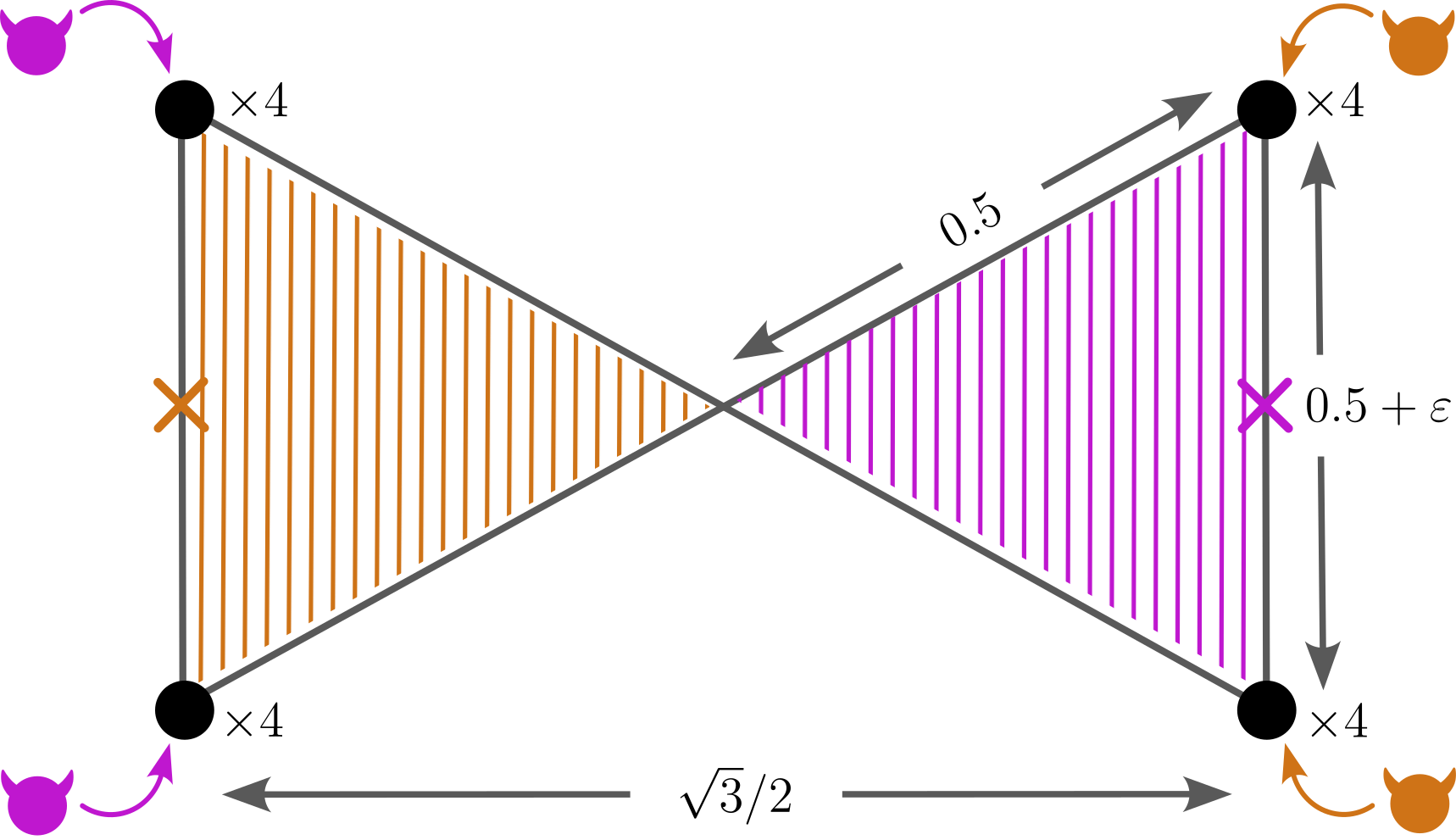}
    \caption{This is a construction showing that there can be iterations where MidExtremes has a larger contraction rate than $1/\sqrt{2}$, even if the safe areas computed by the nodes intersect. In this example, there are $20$ nodes, four of which are Byzantine. We assume that there are two possible views: in both views, all non-corrupted nodes accept all non-corrupted vectors. Additionally, in one view, the non-corrupted nodes accept the two Byzantine vectors on the left (pink nodes), while in the other view, they accept the two Byzantine vectors on the right (marked yellow). The corresponding safe areas are marked with stripes in the figure. The safe areas form triangles, where two sides are $0.5$ long, while the third side is $0.5+\varepsilon$, where $\varepsilon > 0$ can be chosen arbitrarily small. Note that the diameter of the non-corrupted input vectors as well as the diameter of their smallest enclosing ball are $1$. MidExtremes will choose the midpoint of the longest side as the new input value, marked with crosses in the figure. Thus, the inputs in the next iteration are at a distance $\sqrt{3}/2 > 1/\sqrt{2}$ away from each other. This construction works in the asynchronous communication model in our work. In the case of synchronous communication, one would have to replace the multiplicity of each correct vector to two instead of four nodes, and add a non-corrupted node at the intersection of the two safe areas which does not receive any Byzantine vectors at all.}
    \label{fig:one-iteration-midextremes}
\end{figure}

\section{Related Work}

Approximate agreement in one dimension was originally introduced by~\cite{10.1145/5925.5931} to overcome the FLP impossibility result~\cite{10.1145/3149.214121} of distributed computing, which states that no deterministic protocol in the asynchronous communication model that tolerates one crash failure can solve agreement. The idea of relaxing the agreement property has allowed to derive agreement algorithms that not only work in the asynchronous model, but also run in a logarithmic number of rounds in the size of the non-faulty input vectors.  

The multidimensional setting has been first considered by~\cite{mendes2015multidimensional}. The authors propose a solution for multidimensional Byzantine approximate agreement with convex validity. They show lower bounds on the resilience, which are $n>\max(3,d+1)f$ under synchronous communication, and $n>\max(3,d+2)f$ under asynchronous communication. Here, $f$ denotes the actual number of Byzantine nodes in the network. The contraction rate of the algorithms is  $1-1/n$~\cite{10.1145/2484239.2484256} and $\sqrt[d]{1/2}$~\cite{10.1145/2488608.2488657}. 
This contraction rate of multidimensional approximate agreement has been first improved in~\cite{fugger_et_al:LIPIcs.DISC.2018.27}, who considered the problem in dynamic networks. The authors provided two algorithms for the dynamic model that have a dimension-independent contraction rate. For one of these algorithms, the MidExtremes, the authors show that it can also be applied in a static fully-connected network, and has a contraction rate of $\sqrt{7/8}$. In~\cite{10.1145/3212734.3212762}, lower bounds for approximate agreement have been analyzed in the static and the dynamic settings. There, the $1/2$ bound is mentioned as the best-known lower bound for multidimensional Byzantine approximate agreement. Since its introduction, the multidimensional Byzantine approximate agreement problem has been extended to the network-agnostic setting~\cite{10.1145/3558481.3591105,10.1145/3796701.3815960}, to agreement on graphs~\cite{nowak_et_al:LIPIcs.DISC.2019.29, ALISTARH2023113733, 10.1145/3796701.3815967,rybicki2026solvabilityapproximateagreementgraphs}, and to weaker notions of output quality that do not require convex validity~\cite{CAMBUS2026116044,10.1145/3694906.3743343,cambus2025centroidapproximationbyzantinetolerantfederated, practicalvalidity}.

The smallest enclosing balls problem was first introduced in two dimensions in~\cite{seb-problem}, and later formally defined in $n$-dimensions in~\cite{https://doi.org/10.1002/nav.3800210414}. The textbook algorithm for this problem is Welzl's randomized algorithm~\cite{Welzl-SEB}, which has been extended and improved over the past years for practical applications~\cite{10.1007/978-3-540-39658-1_57,10.1145/996546.996548, doi:10.1137/070690419, 10.1145/142675.142678, 10.1145/201019.201036, 313559.313770}. In the context of agreement algorithms, the \seb has been used to define centroid approximation\cite{CAMBUS2026116044}, as well as a new validity condition~\cite{practicalvalidity} for multidimensional protocols. In the latter work, the authors propose agreement algorithms that choose a point inside the \seb of all non-faulty vectors. This validity condition relaxes the convex validity condition. Since only the server aggregates the vectors, no agreement algorithms are considered in that work.

\section{Model and Definitions}

We consider a fully connected network with $n$ nodes, identified by unique IDs $1,\ldots, n$. We assume that these IDs are known to all nodes in the network. The nodes communicate via authenticated channels. These channels are public (an eavesdropper can see the messages transmitted over the channel), reliable (a message sent by a node eventually arrives at the destination), and no cryptographic assumptions are made in this work. We assume that the message schedule is worst-case for a given protocol, i.e., an adversary controls the ordering in which the messages arrive. We further assume that the adversary may corrupt $f<t$ of the nodes in the network. Here, $f$ denotes the actual number of failures not known by the system, while $t$ is a known upper bound on the number of failures defined in the algorithm design. In this paper, we refer to the corrupted nodes as faulty or Byzantine and to the non-corrupted nodes as correct nodes.

\subsection{Communication models}
We consider the synchronous and the asynchronous communication models. In the synchronous communication model, the nodes wake up at the same time and start the protocol. The delay of each message sent over the channels is upper bounded by a known constant $\Delta$. We thus design round-based protocols in this model. In each synchronous round, a node can send a message, wait for messages from its neighbors, and perform local computation. In the asynchronous communication model, the message delay is unbounded. Thus, the communication is event-based. Also in the asynchronous communication model, it is possible to define rounds. However, these are defined locally by letting each node attach an iteration number to each message, identifying which iteration number of the algorithm the corresponding message belongs to. Since a Byzantine node, in particular, a crash node, is indistinguishable from a node communicating via links with large message delays, each node can wait for at most $n-f$ messages in a round before proceeding with the next steps. 

To establish more reliable communication between the nodes, and thus assimilate the nodes' local views, we will make use of additional communication primitives in both the synchronous and the asynchronous settings that we introduce next.

\paragraph*{Consistent broadcast for synchronous communication.} 
Under synchronous communication, all correct nodes receive all correct input values in the same round of communication. However, Byzantine nodes have the power to send different messages to different nodes. To avoid this, we use reliable broadcast~\cite{BrachaRB,srikanth1987simulating} to transmit messages in this case. Assume that the nodes broadcast their inputs as pairs $(i,v_i)$, where $i$ is the ID of the node, and $v_i$ is the corresponding input value. One execution of consistent broadcast\cite{cachin2011introduction} as presented in Algorithm~\ref{alg:consistentBC} provides the following properties while tolerating up to $t<n/3$ Byzantine nodes:

\begin{itemize}
    \item \textbf{Validity:} If a correct node $i$ broadcasts a message $(i,v_i)$, then every correct process reliably accepts this message eventually. 
    \item \textbf{No duplication:} Every correct node reliably accepts at most one message from each sender.
    \item \textbf{Consistency:} If two correct nodes reliably accept the messages $(i,v_i)$ and $(i,v'_i)$, then, $v_i=v'_i$.
    \item \textbf{Integrity:} If some correct node reliably accepts a message $v_i$ with sender $i$ and node $i$ is correct, then $v_i$ was previously broadcast by $i$.
\end{itemize}

\begin{algorithm}[tbh]
\caption{Synchronous consistent broadcast for $n>3t$}
\label{alg:consistentBC}
\begin{algorithmic}[1]
    \State Each node $i \in [n]$ with input value $v_i$ executes the following code:
    \Indent
        \State Broadcast $(i,v_i)$
        \State Store all received $(j,v_j)$ in the set $S_j$
        \State Broadcast $S_i$
        \State Accept $(j,v_j)$ if $(j,v_j)$ was included in at least $n-t$ many sets $S_k$
    \EndIndent
\end{algorithmic}
\end{algorithm}

\paragraph*{Gather primitive for asynchronous communication.}
In the asynchronous communication model, we use a stronger broadcast primitive to exchange the messages, called Gather. The original protocol is due to Canetti and Rabin~\cite{gather-original}. Algorithm~\ref{alg:gather} presents the Gather primitive as in~\cite{gather-blog}. The Gather protocol fulfills the following three properties and can tolerate up to $t<n/3$ Byzantine nodes:
\begin{itemize}
    \item \textbf{Common core:} There exists a set $S$ of input values of size $n-t$  that is included in the output sets of all correct nodes.
    \item \textbf{Validity:} If a correct node $i$ includes a message $(j,v_j)$ from another correct node $j$ in its output set, then $v_j$ is $j$'s input.
    \item \textbf{Consistency:} If two correct nodes respectively include the messages $(i,v_i)$ and $(i,v'_i)$ in their output sets, then $v_i=v'_i$.
\end{itemize}

Other than in the synchronous case, we cannot make sure that the common core of $n-t$ values will include correct values only. In the worst case, all $t$ values may come from Byzantine nodes.

\begin{algorithm}[tbh]
\caption{Gather for $n>3t$}
\label{alg:gather}
\begin{algorithmic}[1]
    \State Each node $i \in [n]$ with input value $v_i$ executes the following code:
    \Indent
        \State \textbf{Broadcast} $(i,v_i)$ reliably.\Comment{Using Bracha's reliable broadcast}
        \State \textbf{Wait} until $n-t$ messages $(j,v_j)$ have been reliably received and store all accepted messages inside $S_i$
        \State \textbf{Broadcast} $S_i$
        \State \textbf{Wait} until $n-t$ sets $S_j$ have been accepted. A set $S_j$ is considered accepted if $i$ has reliably accepted every pair $(k,v_k)\in S_j$. Set $T_i = \bigcup S_j$, where $S_j$ is an accepted set
        \State \textbf{Broadcast} $T_i$
        \State \textbf{Wait} until $n-t$ sets $T_j$ are accepted. A set $T_j$ is considered accepted if $i$ has reliably accepted every pair $(k,v_k)\in T_j$. Set $U_i = \bigcup T_j$, where $T_j$ is an accepted set
        \State\Return $U_i$
    \EndIndent
\end{algorithmic}
\end{algorithm}

\subsection{Multidimensional Byzantine approximate agreement}

In multidimensional Byzantine approximate agreement, the input values of the nodes are vectors $v\in V$. In this work, we assume $V$ to be a finite-dimensional real vector space equipped with an inner product $\langle\cdot,\cdot\rangle: V\times V\rightarrow \mathbb{R}$ and the norm $\lVert x\rVert = \sqrt{\langle x,x\rangle}$. We use $d$ to denote the dimension of $V$. In practical applications, $V$ is often assumed to be $\mathbb{R}^d$, and the distances are measured in terms of the Euclidean norm. Before providing a formal definition of the problem, we first define the convex hull and a closed ball: 

\begin{definition}[Convex set]
    Let $V$ be a real vector space. A subset $W\subseteq V$ is called a \emph{convex set} if for any two elements $x,y\in W$ and all $\lambda\in[0,1]$ holds
    $$\lambda x + (1-\lambda)y \in W.$$
\end{definition}

\begin{definition}[Convex hull]
    A \emph{convex hull} of a set $U\subseteq V$, denoted  $\convexHull(U)$ is the smallest convex set $W$ in $V$ such that $U\subseteq W$.  
\end{definition}

\begin{definition}[Closed ball]
    Let $m\in V$ and $r\in\mathbb{R}, r\ge 0$. A \emph{closed ball} with midpoint in $m$ and radius $r$ is defined as 
    $$\ballOf{m,r} = \{x\in V: \lVert x - m\rVert \le r\}.$$
\end{definition}

Using these definitions, we can now formally define the multidimensional Byzantine approximate agreement problem. 

\begin{definition}[MBAA, adapted version from~\cite{mendes2015multidimensional}]\label{def:MBAA}
    A protocol solves the \emph{Multidimensional Byzantine Approximate Agreement} problem (MBAA) if it satisfies
    \begin{itemize}
        \item $\varepsilon$-\textbf{Agreement:} The output vectors of correct nodes should lie inside a $d$-dimensional closed ball of radius $\varepsilon$.
        \item \textbf{Containment:} The output of all correct nodes must be inside the convex hull of the correct input vectors.
        \item \textbf{Termination:} Each correct node must terminate within a finite amount of time.
    \end{itemize}
\end{definition}
Note that we adapted the $\varepsilon$-Agreement definition in this work (compared to~\cite{mendes2015multidimensional}), such that the nodes must output vectors inside an $\varepsilon$-ball, instead of considering the diameter of the points. We believe a ball-based definition is cleaner for our work, since our algorithms and analysis rely on closed balls. Note that the diameter of the points in our case may be up to $2\varepsilon$ at the time of termination. In contrast, traditional algorithms that terminate with maximum diameter $\varepsilon$ may lie within a ball of radius $r\le\varepsilon\sqrt{\frac{d}{2(d+1)}}$. This bound is given by Jung's theorem~\cite{jung1901kleinste} and is tight if the vectors form a regular $d$-simplex. Thus, the definition considered here is equivalent to the definition in~\cite{mendes2015multidimensional} up to a constant factor $\leq 2$. Moreover, the algorithms in this paper satisfy the original definition of MBAA as in~\cite{mendes2015multidimensional} if two additional iterations are executed. With a contraction rate of $\frac{1}{\sqrt{2}}$ per iteration, this ensures that all correct input vectors are in a ball of radius $\varepsilon/2$, which implies a maximum diameter of $\varepsilon$.

\section{Synchronous multidimensional approximate agreement}\label{sec:synch-MBAA}

In this section, we present the synchronous MBAA algorithm. In this algorithm, the nodes exchange their input vectors reliably in every iteration. Based on the received vectors, they locally compute an area where the convex validity is satisfied. As in~\cite{mendes2015multidimensional}, this is done by computing a so-called safe area on the set of received vectors:

\begin{definition}[Safe area]
    Let $\{v_1,\ldots,v_n\}\subset V$ be a set of $n$ vectors. Given $\ell<n$, the \emph{safe area} on $\ell$ nodes is computed as
    $$\safe_\ell = \bigcap_{\substack{I\subseteq \left[n\right]\\ |I|=\ell}}\convexHull\bigl(\{v_i, i\in I\}\bigr).$$
\end{definition}

Note that the safe area in this definition may be empty. Therefore, the safe area in the algorithm is computed on intersections of $\ell=n-t$ vectors, which guarantees that the safe area is non-empty. 

In order to choose a point inside the safe area, we use the midpoint of the smallest enclosing ball of the safe area, which is defined as follows:

\begin{definition}[Enclosing ball]
    Let $U\subseteq V$. A closed ball $\ballOf{m,r}$ is an \emph{enclosing ball} of the set $U$ if $U\subseteq \ballOf{m,r}$.
\end{definition}

\begin{definition}[Smallest enclosing ball]
    The \emph{smallest enclosing ball} of a set $U\subseteq V$, denoted $\ballOf{U}$, is an enclosing ball $\ballOf{m^*, r^*}$ of $U$ such that $r^*\le r$ holds for every enclosing ball $\ballOf{m, r}$ of $U$.
\end{definition}

The smallest enclosing ball has been shown to be unique. Moreover, in $d$ dimensions, the smallest enclosing ball is always defined by at most $d+1$ points. 
We use $\radius(\ballOf{U})$ to denote the radius of the smallest enclosing ball of the set $U$.

\begin{definition}[Ball midpoint]
    Let $U\subseteq V$. The \emph{ball midpoint} of $U$, denoted \ballMidOf{U}, is the midpoint of $\ballOf{U}$.
\end{definition}

We further define $C$ to be the set of correct input vectors, and $\ballOf{C}$ to denote the smallest enclosing ball of all correct input vectors. For the following sections, we assume that an upper bound $c\cdot\radius(\ballOf{C})$ on the \seb of all correct input vectors is known to all nodes, where $c\ge 1$ is some small constant. In Section~\ref{sec:round-estimation}, we discuss how this assumption can be dropped, and how an upper bound on the number of iterations can be computed locally by each node. Algorithm~\ref{alg:synch-ballmidpoint} presents the synchronous algorithm solving MBAA as pseudocode.

\begin{algorithm}[tbh]
\caption{Synchronous MBAA with \ballMid for $n>\max(3,d+1)t$}
\label{alg:synch-ballmidpoint}
\begin{algorithmic}[1]
    \State Each node $i \in [n]$ with input vector $v_i^0$ executes the following code:
    \Indent
        \For{ $r=1,2,\ldots,\log_2 \bigl(\frac{1}{\varepsilon}\cdot c\cdot\radius(\ballOf{C})\bigr)/\log_2(\sqrt{2})$}
            \State Consistently broadcast current value $v_i^{r-1}$
            \State Store accepted values $v_j^{r-1}$ in $V_i^{r-1} = \{v_j^{r-1}, j\in [n]\}$
            \State $v_i^r \leftarrow \ballMidOf{\safe_{n-t}(V_i^{r-1})}$\label{line:synch-ball}
        \EndFor
        \State\Return $v_i^r$
    \EndIndent
\end{algorithmic}
\end{algorithm}

\subsection{Analysis of Algorithm~\ref{alg:synch-ballmidpoint}}

In this section, we analyze the correctness and the contraction rates of Algorithm~\ref{alg:synch-ballmidpoint}. In particular, we prove the following theorem:

\begin{theorem}\label{thm:synch-contraction}
    Algorithm~\ref{alg:synch-ballmidpoint} solves MBAA according to Definition~\ref{def:MBAA} in the synchronous communication setting in $\log_2 \bigl(\frac{1}{\varepsilon}\cdot c\cdot\radius(\ballOf{C})\bigr)$ iterations when $n>\max(3,d+1)t$, where $C$ is the set of correct input vectors. The contraction rate in every iteration is upper bounded by $\frac{1}{\sqrt{2}}$.
\end{theorem}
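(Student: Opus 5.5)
The plan has four steps. First I establish well-definedness and containment. Then I show that all locally computed safe areas share a common point, using a global view. Next comes a purely geometric lemma about smallest-enclosing-ball centres, which yields the $\frac{1}{\sqrt{2}}$ contraction of $\radius(\ballOf{C})$. Termination and $\varepsilon$-agreement then follow by iterating.

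\emph{Step 1: containment and non-emptiness.} Fix an iteration and let $C$ be the set of correct values at its start. By the validity, consistency and no-duplication properties of consistent broadcast, every correct node $i$ accepts all values of correct nodes and at most one value per faulty node. Define a global view $G$ containing, for each node $j$, the unique value accepted by some correct node, or an arbitrary value if none is accepted. Then every local view satisfies $V_i\subseteq G$, with indices preserved.

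Each $(n-t)$-subset of $V_i$ is also an $(n-t)$-subset of $G$. So $\safe_{n-t}(V_i)$ is an intersection over a subfamily of the sets defining $\safe_{n-t}(G)$, which gives $\safe_{n-t}(G)\subseteq\safe_{n-t}(V_i)$.

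The $n>(d+1)t$ points of $G$ admit, by Tverberg's theorem, a partition into $t+1$ parts whose convex hulls share a point $p$. Any $(n-t)$-subset omits at most $t$ points, so it contains an entire part, and therefore $p\in\safe_{n-t}(G)$. Hence every local safe area is non-empty and contains the common point $p$.

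For containment, $V_i$ contains at least $n-t$ correct values, and the convex hull of those values is one of the sets in the intersection. Therefore $\safe_{n-t}(V_i)\subseteq\convexHull(C)\subseteq\ballOf{C}$. Moreover, the smallest-enclosing-ball centre of a set lies in the convex hull of that set. So $v_i^r\in\convexHull(C)$, and by induction every output lies in the convex hull of the original correct inputs.

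\emph{Step 2: the geometric lemma (main step).} Let $\ballOf{C}=\ballOf{c_0,R}$. Let $S$ be compact and convex with $p\in S\subseteq\ballOf{c_0,R}$, and let $\ballOf{S}=\ballOf{m,\rho}$. I claim that $\lVert m-z\rVert\le R/\sqrt{2}$, where $z=(p+c_0)/2$.

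The tool is the standard optimality property of the smallest enclosing ball: every closed halfspace whose boundary passes through $m$ contains a point of $S$ at distance exactly $\rho$ from $m$. Applying this to the halfspace $\{x:\langle x-m,m-c_0\rangle\ge 0\}$ gives $q\in S$ with $\lVert q-m\rVert=\rho$. Then
\[
R^2\ge\lVert q-c_0\rVert^2=\rho^2+\lVert m-c_0\rVert^2+2\langle q-m,m-c_0\rangle\ge\rho^2+\lVert m-c_0\rVert^2,
\]
so $\lVert m-c_0\rVert^2\le R^2-\rho^2$. Since $p\in S$, we also have $\lVert m-p\rVert\le\rho$. The parallelogram identity then gives
\[
\lVert m-z\rVert^2=\tfrac12\lVert m-p\rVert^2+\tfrac12\lVert m-c_0\rVert^2-\tfrac14\lVert p-c_0\rVert^2\le \tfrac12\rho^2+\tfrac12(R^2-\rho^2)=\tfrac{R^2}{2}.
\]
The point $z$ depends only on $p$ and $c_0$, not on the node. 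Applying the lemma with $S=\safe_{n-t}(V_i)$ for every correct $i$ therefore shows that all new correct values lie in $\ballOf{z,R/\sqrt{2}}$. This gives $\radius(\ballOf{C^{r}})\le\radius(\ballOf{C^{r-1}})/\sqrt{2}$.

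I expect this step to be the crux. It needs the halfspace characterisation of the smallest enclosing ball, which I would either cite or prove by a short perturbation argument: if some halfspace through $m$ contained no point of $S$ at distance $\rho$, moving $m$ slightly away from it would shrink the ball. It also relies on the common point $p$ from Step 1 being the same for all nodes.

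\emph{Step 3: termination and agreement.} After $k$ iterations the correct values lie in a ball of radius at most $(1/\sqrt{2})^k\radius(\ballOf{C})\le(1/\sqrt{2})^k c\,\radius(\ballOf{C})$. The loop runs for $\log_2\bigl(\frac{1}{\varepsilon}c\,\radius(\ballOf{C})\bigr)/\log_2\sqrt{2}$ iterations, so this radius is at most $\varepsilon$. Every iteration consists of finitely many synchronous rounds, so termination holds, and $\varepsilon$-agreement follows. The resilience requirement $n>3t$ is used for consistent broadcast, and $n>(d+1)t$ is used for Tverberg's theorem.
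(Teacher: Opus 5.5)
Your proof is correct and follows the paper's argument. A global view $G$ with $V_i\subseteq G$ gives a point $p$ common to all local safe areas. The two inequalities $\lVert m_i-p\rVert\le\rho_i$ and $\lVert m_i-c_0\rVert^2+\rho_i^2\le R^2$, combined through the parallelogram identity, then put every new correct value within $R/\sqrt{2}$ of $(p+c_0)/2$; this is the paper's Lemma~\ref{lem:synch-contraction-rate} with $q$ in place of your $p$.

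The differences are only in how you prove the sub-lemmas. You get non-emptiness directly from Tverberg's theorem applied to $G$, instead of citing Mendes et al.'s Helly argument. You also derive both the centre-in-safe-area fact and the Pythagoras-type bound from the halfspace optimality property of the smallest enclosing ball. That is cleaner than the paper's separating-hyperplane and orthogonal-diameter arguments (Lemmas~\ref{lem:synch-ballmidpoint-in-safearea} and~\ref{lem:synch-diameter-included}).
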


We start by proving the correctness of the algorithm. We first make use of the results from Mendes et al.~\cite{mendes2015multidimensional} stating that a safe area computed on subsets of $n-t$ vectors is not empty when $n>\max(3,d+1)t$, and that it is included in the convex hull of all correct vectors. Let $C^{r-1}$ denote the set of correct vectors $v_i^{r-1}$ broadcast in iteration $r-1$, where $C^{0} = C$. Let $Corr$ denote the set of IDs of the correct nodes.

\begin{lemma}\label{lem:synch-safearea-nonempty}
    The safe areas $\safe_{n-t}(V_i^{r-1})$ computed by each node in Line~\ref{line:synch-ball} of Algorithm~\ref{alg:synch-ballmidpoint} are non-empty for $n>\max(3,d+1)t$. Additionally, $\safe_{n-t}(V_i^{r-1})\subseteq\convexHull(C^{r-1})\ \forall i\in Corr.$
\end{lemma}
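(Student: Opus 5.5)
We prove the two claims separately. In both, we use that consistent broadcast in the synchronous model makes $V_i^{r-1}$ contain every correct vector $v_j^{r-1}$, $j\in Corr$ (Validity), together with at most one value per faulty sender (No duplication). Hence $|V_i^{r-1}|=m$ with $n-f\le m\le n$, and at most $f\le t$ of its entries are Byzantine. For $m<n$ we read the safe area as the intersection over all $(n-t)$-subsets of the accepted multiset. This keeps the index-based definition meaningful; duplicate vectors are kept as separate entries.

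\emph{Containment.} Let $I^\ast$ be the index set of the correct vectors in $V_i^{r-1}$. Then $|I^\ast|\ge n-f\ge n-t$. Choose any $I\subseteq I^\ast$ with $|I|=n-t$. By definition, $\safe_{n-t}(V_i^{r-1})\subseteq\convexHull(\{v_j^{r-1}:j\in I\})\subseteq\convexHull(C^{r-1})$, since every vector indexed by $I$ is a correct vector of iteration $r-1$.

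\emph{Non-emptiness.} This is the main step, and we use Helly's theorem in $V\cong\mathbb{R}^d$. The family $\{\convexHull(\{v_j:j\in I\}) : |I|=n-t\}$ is a finite family of compact convex sets. Its intersection is non-empty as soon as every $d+1$ members share a point; if the family has at most $d$ members, we apply the same check to all of them. Take any $d+1$ index sets $I_1,\ldots,I_{d+1}$, each of size $n-t$, and count entries outside their common intersection. Each $I_k$ misses $m-(n-t)\le t$ indices of the $m$ entries. So $\bigl|\bigcap_k I_k\bigr|\ge m-(d+1)t\ge n-f-(d+1)t$.

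Here is the one subtle point. This bound is positive when $m=n$, since $n>(d+1)t$, but it is weaker when $m<n$. We handle it by counting misses relative to $m$ rather than $n$. Since $m\ge n-t$ and each set excludes exactly $m-(n-t)$ entries, we get $\bigl|\bigcap_k I_k\bigr|\ge m-(d+1)(m-n+t)$. This expression is decreasing in $m$, so it is minimized at $m=n$, where it equals $n-(d+1)t>0$. Hence $\bigcap_k I_k\neq\emptyset$. Any $j\in\bigcap_k I_k$ gives a common point $v_j$ of the $d+1$ hulls. Helly's theorem then yields $\safe_{n-t}(V_i^{r-1})\neq\emptyset$, exactly as in Mendes et al.~\cite{mendes2015multidimensional}.

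If $m=n-t$, the family is a single hull, which is trivially non-empty. The condition $n>3t$ is needed only for the broadcast guarantees. The only real obstacle is making the index counting precise when fewer than $n$ values are accepted, which the monotonicity argument above resolves.
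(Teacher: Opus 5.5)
Your proof is correct and follows the same route as the paper, which defers to Mendes et al. For containment, you use an $(n-t)$-subset made up only of correct vectors; for non-emptiness, you show that any $d+1$ of the hulls share a common index (hence a common vector) and then invoke Helly's theorem. Your explicit count $|\bigcap_k I_k|\ge m-(d+1)(m-n+t)\ge n-(d+1)t$ for $m<n$ accepted values is a detail the paper leaves implicit, and you handle it correctly.
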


The proof of this lemma follows directly from the work in \cite{mendes2015multidimensional}. For the non-empty intersection, the authors show that due to the inequality $n>\max(3,d+1)t$, any $d+1$ convex hulls of $n-t$ vectors must intersect. They then apply Helly's theorem~\cite{MR157289} to deduce that all such subsets must have a non-empty intersection. Note that in the original proof, $f$ instead of $t$ was used in the argument, and also the resilience bound was stated with $f$. The property that the safe area is included in the convex hull of all correct vectors follows from the fact that at least one of the subsets of $n-t$ vectors that are used locally to compute the safe area consists only of correct nodes. Thus, any intersection of the convex hulls will be inside the convex hull of these $n-t$ correct nodes.

In the next step, we show that the midpoint of the ball is inside the safe area:

\begin{lemma}\label{lem:synch-ballmidpoint-in-safearea}
    The ball midpoint $\ballMidOf{\safe_{n-t}(V_i^{r-1})}$ computed by each node in Line~\ref{line:synch-ball} of Algorithm~\ref{alg:synch-ballmidpoint} is inside the safe area $\safe_{n-t}(V_i^{r-1})$ of the respective node. 
\end{lemma}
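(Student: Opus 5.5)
The plan is to prove the general fact that for any nonempty compact convex set $K\subseteq V$, the midpoint $m^*$ of $\ballOf{K}$ lies in $K$. We then apply this to $K=\safe_{n-t}(V_i^{r-1})$.

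First, $K$ has the required properties. By Lemma~\ref{lem:synch-safearea-nonempty}, the safe area is non-empty. It is an intersection of convex hulls of finitely many points, each of which is compact and convex, so $K$ is compact and convex. Compactness guarantees that the smallest enclosing ball exists, and the paper notes that it is unique. If $K$ is a single point, the claim is trivial, so assume $r^*=\radius(\ballOf{K})>0$.

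The core step argues by contradiction via projection. Suppose $m^*\notin K$. Since $K$ is nonempty, closed and convex in the finite-dimensional inner product space $V$, there is a unique nearest point $p\in K$ to $m^*$. Set $u=(m^*-p)/\lVert m^*-p\rVert$. The projection characterization gives $\langle x-p,u\rangle\le 0$ for all $x\in K$.

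For small $\lambda>0$, consider the shifted center $m_\lambda=m^*-\lambda u$. For any $x\in K$,
\begin{equation*}
\lVert x-m_\lambda\rVert^2=\lVert x-m^*\rVert^2+2\lambda\langle x-m^*,u\rangle+\lambda^2 .
\end{equation*}
We have $\langle x-m^*,u\rangle=\langle x-p,u\rangle-\lVert m^*-p\rVert\le -\delta$, where $\delta=\lVert m^*-p\rVert>0$. Hence
\begin{equation*}
\lVert x-m_\lambda\rVert^2\le (r^*)^2-2\lambda\delta+\lambda^2 ,
\end{equation*}
which is strictly less than $(r^*)^2$ for $0<\lambda<2\delta$, uniformly in $x$. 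So $K\subseteq\ballOf{m_\lambda,r'}$ with $r'<r^*$, contradicting minimality of $r^*$. Therefore $m^*\in K$.

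The main obstacle is making the contradiction uniform over all $x\in K$. This is handled by the separation inequality $\langle x-m^*,u\rangle\le-\delta$, which comes from the obtuse-angle property of projections onto closed convex sets. Existence of $\ballOf{K}$, as the minimizer of the continuous function $m\mapsto\max_{x\in K}\lVert x-m\rVert$ over a bounded region, is routine and relies on $K$ being compact.
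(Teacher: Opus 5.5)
Your proof is correct and follows essentially the same route as the paper. Both argue by contradiction: they separate the ball's center from the convex safe area by a hyperplane, then exhibit a strictly smaller enclosing ball by moving the center toward that hyperplane. In fact, your shifted ball with $\lambda=\delta$ is exactly the paper's $B^*$ for the hyperplane through $p$ orthogonal to $u$ (center $p$, radius $\sqrt{(r^*)^2-\delta^2}$), and your uniform estimate $\lVert x-m_\lambda\rVert^2\le (r^*)^2-2\lambda\delta+\lambda^2$ proves what the paper only asserts geometrically: that this ball contains the minor spherical cap, and hence the safe area.
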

\begin{proof}
    Let $m_i\coloneqq\ballMidOf{\safe_{n-t}(V_i^{r-1})}$. Assume by means of contradiction that $m_i\notin \safe_{n-t}(V_i^{r-1})$. Note that the safe area by definition forms a convex set, as it is the intersection of convex sets. Since $m_i$ is not included in the safe area, there exists a hyperplane separating $m_i$ from $\safe_{n-t}(V_i^{r-1})$ that does not include $m_i$. This hyperplane divides the $\ballOf{\safe_{n-t}(V_i^{r-1})}$ in two spherical caps. We now define $B^*$ to be the \seb of the intersection of this hyperplane with the $\ballOf{\safe_{n-t}(V_i^{r-1})}$. Note that $B^*$ has a strictly smaller radius than $\ballOf{\safe_{n-t}(V_i^{r-1})}$, as it does not include $m_i$. Moreover $B^*$ includes the minor spherical cap of $\ballOf{\safe_{n-t}(V_i^{r-1})}$, and thus the $\safe_{n-t}(V_i^{r-1})$. This is however a contradiction to $\ballOf{\safe_{n-t}(V_i^{r-1})}$ being the \seb of the safe area, and thus the theorem statement holds.

\end{proof}

Note that in \cite{mendes2015multidimensional} it was shown that any algorithm that deterministically chooses a point inside the safe area solves MBAA. The above lemma thus concludes the convergence discussion of our algorithm. It thus remains to show that the contraction rate and the resulting number of iterations to reach the $\varepsilon$-Agreement are as claimed.

In the following part, we focus on proving the contraction rate and thus the runtime of the algorithm.

\begin{lemma}\label{lem:safe-in-trueball}
    All locally computed safe areas $\safe_{n-t}(V_i^{r-1}), i\in Corr$ by the correct nodes in an iteration of Algorithm~\ref{alg:synch-ballmidpoint} are inside the \seb of all correct vectors in that iteration. That is, $\safe_{n-t}(V_i^{r-1})\subseteq\ballOf{C^{r-1}}\ \forall i\in Corr.$
\end{lemma}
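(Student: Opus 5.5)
The argument is a chain of inclusions: first that the safe area sits inside the convex hull of the correct vectors, then that this hull sits inside the smallest enclosing ball of those vectors.

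\emph{Step 1.} By Lemma~\ref{lem:synch-safearea-nonempty}, every correct node $i$ satisfies $\safe_{n-t}(V_i^{r-1})\subseteq\convexHull(C^{r-1})$. It is worth recalling why, since this is the only place the communication model enters. Consistent broadcast guarantees, by its Validity and Integrity properties, that every correct node accepts the true vector $v_j^{r-1}$ of every correct node $j$ in the synchronous round. Hence $V_i^{r-1}$ contains all of $C^{r-1}$, together with at most $t$ further vectors from Byzantine senders. Since there are at least $n-t$ correct senders, some index set $I$ of size $n-t$ used in the intersection defining $\safe_{n-t}(V_i^{r-1})$ consists only of correct nodes. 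The safe area is contained in $\convexHull(\{v_j^{r-1}:j\in I\})$, which in turn is contained in $\convexHull(C^{r-1})$.

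\emph{Step 2.} Next I would show $\convexHull(C^{r-1})\subseteq\ballOf{C^{r-1}}$. A closed ball $\ballOf{m,r}$ is convex: if $\lVert x-m\rVert\le r$ and $\lVert y-m\rVert\le r$, the triangle inequality gives
$$\lVert \lambda x+(1-\lambda)y-m\rVert\le\lambda\lVert x-m\rVert+(1-\lambda)\lVert y-m\rVert\le r.$$
By definition $\ballOf{C^{r-1}}$ is an enclosing ball of $C^{r-1}$, so it is a convex set containing $C^{r-1}$. By minimality of the convex hull, it therefore contains $\convexHull(C^{r-1})$.

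\emph{Step 3.} Chaining the two inclusions gives $\safe_{n-t}(V_i^{r-1})\subseteq\ballOf{C^{r-1}}$ for all $i\in Corr$.

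The only step needing care is Step 1, namely confirming that each local view really contains an all-correct subset of size $n-t$. This relies on synchronous consistent broadcast delivering every correct value to every correct node. The possible non-emptiness issue is already settled by Lemma~\ref{lem:synch-safearea-nonempty}, and in any case it is irrelevant to the inclusion itself.
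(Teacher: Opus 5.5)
Your proposal is correct and takes the paper's approach: it chains $\safe_{n-t}(V_i^{r-1})\subseteq\convexHull(C^{r-1})$, from Lemma~\ref{lem:synch-safearea-nonempty}, with $\convexHull(C^{r-1})\subseteq\ballOf{C^{r-1}}$. Your Step~2 spells out, through convexity of closed balls and minimality of the convex hull, the fact the paper attributes to ``the definition of the smallest enclosing ball.''
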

\begin{proof}
    Observe first that $\safe_{n-t}(V_i^{r-1})\subseteq\convexHull(C^{r-1})\ \forall i\in Corr$ by Lemma~\ref{lem:synch-safearea-nonempty}. By definition of the \seb, we have $\convexHull(C^{r-1})\subseteq\ballOf{C^{r-1}}$, which concludes the proof.
\end{proof}

In the next lemma, we show that $\ballOf{C^{r-1}}$ includes a hemisphere of each locally computed ball $\ballOf{\safe_{n-t}(V_i^{r-1})}, i\in Corr$. Note that the intersection of these two balls is already guaranteed by Lemma~\ref{lem:safe-in-trueball}. 

\begin{lemma}\label{lem:synch-diameter-included}
    Let $m$ be the midpoint of $\ballOf{C^{r-1}}$, and $m_i\coloneq\ballMidOf{\safe_{n-t}(V_i^{r-1})}$ denote the midpoint of the \seb of the locally computed safe area of node $i$. Let $\overrightarrow{m,m_i}$ be the line going through the midpoints of the two balls. $\ballOf{C^{r-1}}$ includes the diameters of $\ballOf{\safe_{n-t}(V_i^{r-1})}$ that are orthogonal to $\overrightarrow{m,m_i}$.
\end{lemma}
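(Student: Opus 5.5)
The plan is to reduce the claim to one inequality and then use a standard optimality property of smallest enclosing balls.

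\emph{Reduction.} Write $B\coloneqq\ballOf{C^{r-1}}$ with midpoint $m$ and radius $R$. Write $S_i\coloneqq\safe_{n-t}(V_i^{r-1})$, and let $B_i\coloneqq\ballOf{S_i}$ have midpoint $m_i$ and radius $r_i$. If $m_i=m$, then every direction is orthogonal. By Lemma~\ref{lem:safe-in-trueball}, $B$ is an enclosing ball of $S_i$, so minimality gives $r_i\le R$, and the claim follows. Otherwise set $w\coloneqq m_i-m\neq 0$. The diameters of $B_i$ orthogonal to $\overrightarrow{m,m_i}$ consist of the points $m_i+\lambda u$ with $\lVert u\rVert=1$, $\langle u,w\rangle=0$ and $|\lambda|\le r_i$. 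By Pythagoras, $\lVert m_i+\lambda u-m\rVert^2=\lVert w\rVert^2+\lambda^2$. So it suffices to prove the single inequality
$$\lVert w\rVert^2+r_i^2\le R^2 .$$

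\emph{Key structural fact.} I would first note that $S_i$ is compact: it is a finite intersection of convex hulls of finitely many points, and it is non-empty by Lemma~\ref{lem:synch-safearea-nonempty}. Hence the set of contact points
$$P_i\coloneqq\{p\in S_i:\lVert p-m_i\rVert=r_i\}$$
is non-empty and compact. The main step is the classical characterization that $m_i\in\convexHull(P_i)$. I expect this to be the main obstacle, since it is a genuine property of the smallest enclosing ball rather than a direct consequence of the earlier lemmas. I would prove it by contradiction. If $m_i\notin\convexHull(P_i)$, a separating direction $e$ exists with $\langle p-m_i,e\rangle\ge\delta>0$ for all $p\in P_i$. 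Moving the center to $m_i+\eta e$ for small $\eta>0$ then strictly decreases the distance to every point of $S_i$ near $P_i$. By compactness, points of $S_i$ away from $P_i$ are at distance at most $r_i-\gamma$ for some $\gamma>0$, so for small $\eta$ they stay inside a ball of radius below $r_i$. Altogether we get an enclosing ball of $S_i$ of radius $<r_i$, contradicting minimality. This is the same flavour of argument as in the proof of Lemma~\ref{lem:synch-ballmidpoint-in-safearea}.

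\emph{Concluding step.} Since $m_i\in\convexHull(P_i)$, not every $p\in P_i$ can satisfy $\langle p-m_i,w\rangle<0$: otherwise the convex combination equal to $m_i$ would give $0=\langle m_i-m_i,w\rangle<0$. Hence there is $p\in P_i$ with $\langle p-m_i,w\rangle\ge 0$. Expanding,
$$\lVert p-m\rVert^2=\lVert w\rVert^2+r_i^2+2\langle p-m_i,w\rangle\ge\lVert w\rVert^2+r_i^2 .$$
On the other hand, $p\in S_i\subseteq B$ by Lemma~\ref{lem:safe-in-trueball}, so $\lVert p-m\rVert\le R$. Combining the two inequalities yields $\lVert w\rVert^2+r_i^2\le R^2$, which by the reduction proves the lemma.
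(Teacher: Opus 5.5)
Your proof is correct, and it takes a different route from the paper. The paper argues geometrically by contradiction. Using rotational symmetry about the line through $m$ and $m_i$, it reduces to a single orthogonal diameter and disposes of the degenerate cases. It then assumes that diameter sticks out of $B$, notes that the lens $B_i\cap B$ still contains the safe area, and claims this lens has an enclosing ball of radius smaller than $r_i$. The quantitative inequality $\lVert m_i-m\rVert^2+r_i^2\le R^2$ is only extracted afterwards, via Pythagoras, inside Lemma~\ref{lem:synch-contraction-rate}.

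You instead reduce at once to that inequality and derive it from the first-order optimality condition $m_i\in\convexHull(P_i)$ of the smallest enclosing ball. You pick a contact point on the side of the hyperplane through $m_i$ orthogonal to $w$ that faces away from $m$, and use that it lies in $B$.

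Your route is more rigorous, and it directly produces the inequality used later. The paper justifies the smaller radius by noting that the lens has diameter below $2r_i$, but that alone proves nothing: an equilateral triangle inscribed in the boundary of $B_i$ has diameter below $2r_i$ yet has $B_i$ as its smallest enclosing ball. Making the paper's step rigorous requires the specific lens structure, namely two minor caps over a common base disk of radius less than $r_i$.

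The price of your approach is the compactness and separation machinery. You could drop the separation step, since you only need the contact-point property in the single direction $w$. If every contact point satisfied $\langle p-m_i,w\rangle<0$, moving the center slightly from $m_i$ towards $m$ would already give a smaller enclosing ball.
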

\begin{proof}
    We prove the lemma for a fixed correct node $i$ and a fixed iteration $r-1$. Observe first that the midpoint of $\ballOf{\safe_{n-t}(V_i^{r-1})}$ is included in $\ballOf{C^{r-1}}$, because of the combination of Lemma~\ref{lem:synch-ballmidpoint-in-safearea} and Lemma~\ref{lem:safe-in-trueball}.

    Note next that, if a diameter of $\ballOf{\safe_{n-t}(V_i^{r-1})}$ that is orthogonal to $\overrightarrow{m,m_i}$ is included in $\ballOf{C^{r-1}}$, then all orthogonal diameters are included in $\ballOf{C^{r-1}}$, and thus a hemisphere of $\ballOf{\safe_{n-t}(V_i^{r-1})}$ is inside $\ballOf{C^{r-1}}$. This is because any hyperplane in a $d$-dimensional vector space orthogonal to a line through $m$ intersects a $d$-dimensional ball $\ballOf{C^{r-1}}$ in either a $d-1$-dimensional ball, or a single point, or not at all. Moreover, the midpoint of the $d-1$-dimensional ball representing the intersection ball will lie on the line through $m$. Since the midpoint of $\ballOf{\safe_{n-t}(V_i^{r-1})}$ is on the line $\overrightarrow{m,m_i}$, and a diameter of the ball which is orthogonal to $\overrightarrow{m,m_i}$ is included in $\ballOf{C^{r-1}}$, all diameters must be included as well.

    We first cover two special cases: if the intersection of the two balls is a single point, then $\ballOf{\safe_{n-t}(V_i^{r-1})}$ degenerates to a single point, and the lemma statement holds. The lemma also holds trivially if $\ballOf{\safe_{n-t}(V_i^{r-1})}\subseteq\ballOf{C^{r-1}}$. We exclude these two cases for the following analysis.

    Assume by means of contradiction that none of the diameters of $\ballOf{\safe_{n-t}(V_i^{r-1})}$ that are orthogonal to $\overrightarrow{m,m_i}$ are included in $\ballOf{C^{r-1}}$, i.e., the hyperplane defined by the orthogonal diameters intersects the boundary of $\ballOf{\safe_{n-t}(V_i^{r-1})}$ outside of $\ballOf{C^{r-1}}$. Note that all vectors that define $\ballOf{\safe_{n-t}(V_i^{r-1})}$, which are in the safe area, are inside $\ballOf{C^{r-1}}$ (see Lemma~\ref{lem:safe-in-trueball}). In particular, these vectors are inside the intersection $\ballOf{\safe_{n-t}(V_i^{r-1})}\cap\ballOf{C^{r-1}}$. Consider the intersection $\ballOf{\safe_{n-t}(V_i^{r-1})}\cap\ballOf{C^{r-1}}$. Note that the diameter of this intersection is strictly smaller than the diameter of $\ballOf{\safe_{n-t}(V_i^{r-1})}$, since the balls are not identical. Therefore, the $d$-dimensional \seb of this intersection has a smaller radius than $\ballOf{\safe_{n-t}(V_i^{r-1})}$. Moreover, this ball includes $\safe_{n-t}(V_i^{r-1})$, meaning that we found an enclosing ball of $\safe_{n-t}(V_i^{r-1})$ which is smaller than the smallest enclosing ball $\ballOf{\safe_{n-t}(V_i^{r-1})}$, which is a contradiction. This shows that at least one diameter of $\ballOf{\safe_{n-t}(V_i^{r-1})}$ that is orthogonal to $\overrightarrow{m,m_i}$ must be inside $\ballOf{C^{r-1}}$ and thus also a whole hemisphere of $\ballOf{\safe_{n-t}(V_i^{r-1})}$.

\end{proof}

We now show one of the main lemmas needed to prove contraction. In this lemma, we show that the safe areas computed by the correct nodes in the same iteration have a non-empty intersection. 

\begin{lemma}\label{lem:synch-all-safearea-intersect}
    Let $\safe_{n-t}(V_i^{r-1}), i\in Corr$ be the locally computed safe areas by correct nodes in iteration $r-1$ of Algorithm~\ref{alg:synch-ballmidpoint}. Then, 
    $$\bigcap_{i\in Corr}\safe_{n-t}(V_i^{r-1})\neq \varnothing.$$
\end{lemma}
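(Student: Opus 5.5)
The plan is to exhibit one explicit point common to all locally computed safe areas by building a \emph{global view} of the iteration. Let $G^{r-1}$ be the set of pairs $(j,v_j^{r-1})$ accepted by at least one correct node in iteration $r-1$ of Algorithm~\ref{alg:synch-ballmidpoint}. The Consistency property of consistent broadcast says that two correct nodes never accept different values for the same sender. Hence $G^{r-1}$ contains at most one vector per ID $j\in[n]$, so $|G^{r-1}|\le n$. By Validity, every correct node accepts the values of all correct nodes, so $C^{r-1}\subseteq V_i^{r-1}\subseteq G^{r-1}$ for every $i\in Corr$, and in particular $|V_i^{r-1}|\ge n-t$.

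Throughout, subsets are indexed by sender IDs, so repeated vectors are counted with multiplicity. I define the global safe area
$$\safe_{n-t}(G^{r-1})=\bigcap_{\substack{I\subseteq \mathrm{ID}(G^{r-1})\\ |I|=n-t}}\convexHull\bigl(\{v_j^{r-1}: j\in I\}\bigr),$$
where $\mathrm{ID}(G^{r-1})$ denotes the set of IDs occurring in $G^{r-1}$. The first key step is monotonicity. Every $(n-t)$-subset of IDs of $V_i^{r-1}$ is also an $(n-t)$-subset of IDs of $G^{r-1}$. So the local safe area of node $i$ is an intersection over a subfamily of the sets whose intersection defines the global safe area. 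Therefore
$$\safe_{n-t}(G^{r-1})\subseteq\safe_{n-t}(V_i^{r-1})\quad\text{for all } i\in Corr,$$
and it suffices to show that $\safe_{n-t}(G^{r-1})\neq\varnothing$.

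The second step is nonemptiness of the global safe area. I would redo the Mendes et al.\ argument behind Lemma~\ref{lem:synch-safearea-nonempty} for a view of size $m:=|G^{r-1}|$ with $n-t\le m\le n$. Take any $d+1$ index sets $I_1,\dots,I_{d+1}$ of size $n-t$. Each $I_k$ misses at most $m-(n-t)\le t$ IDs of $G^{r-1}$. Hence their common intersection has size at least $m-(d+1)t\ge n-t-dt$, and I need this to be at least $1$. This holds exactly because $n>(d+1)t$: $n-t-dt=n-(d+1)t\ge 1$. Any ID $j$ in the common intersection gives a point $v_j^{r-1}$ lying in all $d+1$ hulls. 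Since the family of hulls is finite and consists of convex sets in $V\cong\mathbb{R}^d$, Helly's theorem yields a point in all of them.

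I expect the main obstacle to be the bookkeeping around the global view rather than the geometry. One must argue that $G^{r-1}$ is well defined, meaning at most one vector per Byzantine sender across all correct nodes, which is exactly what Consistency provides. One must also ensure the Helly count uses $m\le n$ and not some larger number of candidate vectors. If a Byzantine node's value is accepted by only some correct nodes, it still enters $G^{r-1}$ once. That only adds sets to the global intersection, which is harmless for the inclusion and is already accounted for by the bound $m\le n$. Combining the two steps, every point of $\safe_{n-t}(G^{r-1})$ lies in $\bigcap_{i\in Corr}\safe_{n-t}(V_i^{r-1})$, which proves the lemma.
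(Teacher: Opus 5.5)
Your proof is essentially the paper's. You build a global view $G^{r-1}$ from the consistent-broadcast guarantees and note that every $(n-t)$-subset of a local view is also an $(n-t)$-subset of $G^{r-1}$. Hence each local safe area contains $\safe_{n-t}(G^{r-1})$, and it remains to show that this global safe area is non-empty. Two differences are worth noting:
\begin{itemize}
\item You correctly identify Consistency as the property that makes $G^{r-1}$ well defined. The paper cites No duplication, which alone does not rule out different correct nodes accepting different values from the same Byzantine sender.
\item You run the Helly argument directly for a view of size $m=|G^{r-1}|\le n$. The paper instead pads $G^{r-1}$ to exactly $n$ vectors and appeals to Lemma~\ref{lem:synch-safearea-nonempty}.
\end{itemize}

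Your direct Helly count, however, has an arithmetic slip. You bound the number of IDs each $I_k$ misses by $t$, which gives a common intersection of size at least $m-(d+1)t$. You then assert $m-(d+1)t\ge n-(d+1)t$, but this requires $m\ge n$ and fails whenever $m<n$. For example, with $d=2$, $t=1$, $n=4$ and $m=n-t=3$, it reads $0\ge 1$. The fix is to keep the exact miss count $m-(n-t)$ instead of replacing it by $t$. Then the common intersection of $I_1,\dots,I_{d+1}$ contains at least
$m-(d+1)\bigl(m-(n-t)\bigr)=(d+1)(n-t)-dm\ge n-(d+1)t\ge 1$
IDs. The middle inequality uses $m\le n$, which is exactly the bound you flagged as important.

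Alternatively, pad $G^{r-1}$ with arbitrary dummy vectors under fresh IDs up to $n$ entries, as the paper does. This only adds hulls to the global intersection, so it shrinks the global safe area and preserves the containment in every local safe area. With $m=n$, your original count is then correct as written. With either fix, the proof is complete.
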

\begin{proof}
    Consider iteration $r-1$ of Algorithm~\ref{alg:synch-ballmidpoint}. Note that, under consistent broadcast and in particular its implementation as in Algorithm~\ref{alg:consistentBC}, all correct nodes accept all correct vectors broadcast in this iteration. This reflects the validity property of consistent broadcast. Now consider a global view of the system, where $n-f$ correct and $f$ Byzantine nodes are present. Due to the no duplication property of consistent broadcast, each Byzantine party can have at most one vector accepted by the correct nodes in one consistent broadcast iteration. 
    
    Using these properties, we will now define a global view. This view corresponds to all accepted vectors (one per sender) in an iteration of the algorithm. Note that there are exactly $n-f$ correct accepted vectors and there can be at most $f$ different Byzantine vectors in this view. We will assume here there are exactly $f$ different Byzantine vectors, and add more deterministically chosen vectors if fewer than $f$ vectors have been broadcast by Byzantine nodes. Let $G^{r-1}$ denote this global set of vectors in iteration $r-1$. Consider now $\safe_{n-t}(G^{r-1})$, i.e., a globally computed safe area under the assumption that $f$ is not known. By Lemma~\ref{lem:synch-safearea-nonempty}, $\safe_{n-t}(G^{r-1})\neq \varnothing$, as $G$ may be the local view of one of the correct nodes. We will show that this safe area is a subset of the local safe area of each correct node, that is $\safe_{n-t}(G^{r-1})\subseteq\safe_{n-t}(V_i^{r-1}), \forall i\in Corr$.

    The main observation is that every correct node $i$ will accept a subset of the vectors in $G^{r-1}$, i.e., $V_i^{r-1} \subseteq G^{r-1}\ \forall i\in Corr$. This subset includes the $n-f$ correct vectors and potentially fewer Byzantine vectors than in the global view. Consider the convex hulls of subsets of $n-t$ vectors in $G^{r-1}$ which are used to compute the global safe area. Note that by the above property, the local safe area $\safe_{n-t}(V_i^{r-1})$ of a correct node $i$ is computed by intersecting a subset of convex hulls that are used for the computation of the global safe area. Thus, the locally computed safe area always includes the global safe area $\safe_{n-t}(G^{r-1})$.
\end{proof}

\begin{lemma}\label{lem:synch-contraction-rate}
    The contraction rate of Algorithm~\ref{alg:synch-ballmidpoint} in every iteration is upper bounded by $\frac{1}{\sqrt{2}}$.
\end{lemma}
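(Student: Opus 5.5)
The plan is to exhibit one explicit ball of radius $R/\sqrt{2}$, where $R\coloneqq\radius(\ballOf{C^{r-1}})$, that contains every new correct value $v_i^r=m_i\coloneqq\ballMidOf{\safe_{n-t}(V_i^{r-1})}$. It then follows that $\radius(\ballOf{C^{r})}\le R/\sqrt{2}$. Let $m$ be the midpoint of $\ballOf{C^{r-1}}$. By Lemma~\ref{lem:synch-all-safearea-intersect}, I fix a point $p\in\bigcap_{i\in Corr}\safe_{n-t}(V_i^{r-1})$. The candidate enclosing ball is centered at $(m+p)/2$ with radius $R/\sqrt{2}$.

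\textbf{Step 1: a ``contact point'' inequality.} For a fixed correct node $i$, write $S_i=\safe_{n-t}(V_i^{r-1})$ and $r_i=\radius(\ballOf{S_i})$. I will use the standard characterization of the smallest enclosing ball: $m_i$ lies in the convex hull of the points of $\overline{S_i}$ on the boundary sphere of $\ballOf{S_i}$. Otherwise, shifting the center slightly towards these contact points would shrink the ball, in the spirit of the separating-hyperplane argument of Lemma~\ref{lem:synch-ballmidpoint-in-safearea}. Write $m_i=\sum_j\lambda_j x_j$ with contact points $x_j$ satisfying $\lVert x_j-m_i\rVert=r_i$. For any point $q$, expanding the squares gives
\[
\sum_j\lambda_j\lVert x_j-q\rVert^2=\lVert m_i-q\rVert^2+r_i^2 ,
\]
since the cross terms vanish. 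Hence some $x_j\in\overline{S_i}$ satisfies $\lVert x_j-q\rVert^2\ge\lVert m_i-q\rVert^2+r_i^2$. I take $q=m$. Since $\overline{S_i}\subseteq\ballOf{C^{r-1}}$ by Lemma~\ref{lem:safe-in-trueball}, we have $\lVert x_j-m\rVert\le R$, and therefore
\[
\lVert m_i-m\rVert^2+r_i^2\le R^2 .
\]
This is a quantitative form of Lemma~\ref{lem:synch-diameter-included}, and that lemma can serve as an alternative justification.

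\textbf{Step 2: use the common point.} Since $p\in S_i\subseteq\ballOf{S_i}$, we have $\lVert p-m_i\rVert\le r_i$. Combining this with Step 1 gives
\[
\lVert m_i-m\rVert^2+\lVert m_i-p\rVert^2\le R^2
\]
for every correct $i$.

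\textbf{Step 3: parallelogram identity.} For any $x$, the parallelogram identity gives
\[
\Bigl\lVert x-\tfrac{m+p}{2}\Bigr\rVert^2=\tfrac12\bigl(\lVert x-m\rVert^2+\lVert x-p\rVert^2\bigr)-\tfrac14\lVert m-p\rVert^2 .
\]
Applying this with $x=m_i$ and using Step 2 yields $\lVert m_i-\tfrac{m+p}{2}\rVert^2\le R^2/2$. So every new correct vector lies in $\ballOf{\tfrac{m+p}{2},R/\sqrt{2}}$, and the smallest enclosing ball of $C^r$ has radius at most $R/\sqrt{2}$, which is the claimed contraction.

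\textbf{Main obstacle.} The delicate step is Step 1, namely rigorously justifying that the center of the smallest enclosing ball lies in the convex hull of its contact points. Two issues need care: $S_i$ is a closed convex polytope, so compactness gives the contact points, and degenerate cases arise when $r_i=0$, which is trivial. I would prove this by contradiction. If $m_i$ were strictly separated from the convex hull of the contact set by a hyperplane, then moving the center a small distance towards that side would strictly decrease the maximum distance to $S_i$, using compactness to handle the non-contact points. This contradicts minimality. The common point $p$ is essential: pairwise intersections of safe areas alone would not supply a single center that works for all $i$ simultaneously.
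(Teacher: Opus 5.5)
Your proposal is correct and follows the paper's proof essentially step for step. The paper also fixes a common point $q$ via Lemma~\ref{lem:synch-all-safearea-intersect}, adds $\lVert m_i-q\rVert^2\le r_i^2$ to the Pythagorean bound $\lVert m_i-m\rVert^2+r_i^2\le R^2$, and applies the same parallelogram identity to obtain a ball of radius $R/\sqrt{2}$ centered at $(m+q)/2$; the only difference is that the paper takes the Pythagorean bound from the hemisphere Lemma~\ref{lem:synch-diameter-included}, whose proof is somewhat loose, whereas your contact-point characterization of the smallest enclosing ball gives a cleaner, self-contained derivation of that same inequality.
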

\begin{proof}

Consider iteration $r-1$ of the algorithm. Assume for this proof that the midpoint of $\ballOf{C^{r-1}}$ is at the origin $O$. Note that this can be achieved by translation of the construction, without influencing the distances. Let $m_i$ be the midpoints of $\ballOf{\safe_{n-t}(V_i^{r-1})}$ where $i\in Corr$. Let $\rho_i$ be the respective radii $\rho_i\coloneqq\radius(\ballOf{\safe_{n-t}(V_i^{r-1})})$ and let $q$ denote some point inside the intersection of the smallest enclosing balls of the safe areas of all correct nodes, i.e., $q\in\bigcap_{i\in Corr}\ballOf{\safe_{n-t}(V_i^{r-1})}$. Such a point exists by Lemma~\ref{lem:synch-all-safearea-intersect}. 

Note first that, by definition of $q$, the following inequality holds:
\begin{align}\label{eq:distance-to-intersection-point}
   \lVert m_i - q\rVert^2\le \rho_i^2 
\end{align}
This is because $q$ is included in every \seb $\ballOf{\safe_{n-t}(V_i^{r-1})}$ with midpoint in $m_i$.

By Lemma~\ref{lem:synch-diameter-included}, we know that all diameters of the locally computed balls $\ballOf{\safe_{n-t}(V_i^{r-1})}$ that are orthogonal to $\overrightarrow{O,m_i}$ are included in $\ballOf{C^{r-1}}$. Let $\Rho\coloneqq\radius(\ballOf{C^{r-1}})$ be the radius of the true ball. Then, 
\begin{align}\label{eq:kinda-pythagoras}
    \lVert m_i - O\rVert^2 + \rho_i^2 \le \Rho^2
\end{align}
This inequality follows by applying the theorem of Pythagoras. Assume that the diameter of $\ballOf{\safe_{n-t}(V_i^{r-1})}$ is orthogonal to $\overrightarrow{O,m_i}$ is a chord of $\ballOf{C^{r-1}}$. In this case, the above equation would be satisfied with equality. Since the diameter may be strictly inside $\ballOf{C^{r-1}}$, the distance from the origin to one of the endpoints of the diameter can be upper bounded by the radius of $\ballOf{C^{r-1}}$, giving the above inequality.

By adding the inequalities~(\ref{eq:distance-to-intersection-point}) and~(\ref{eq:kinda-pythagoras}), we get 
$$\lVert m_i-q\rVert^2 +\lVert m_i\rVert^2 +\rho_i^2 \le \Rho^2 + \rho_i^2.$$
That is, 
\begin{align}\label{eq:sum-of-inequalities}
    \lVert m_i-q\rVert^2 +\lVert m_i\rVert^2 \le \Rho^2.
\end{align}
We can reformulate the left side as
\begin{align*}
    \lVert m_i-q\rVert^2 +\lVert m_i\rVert^2 &= \lVert m_i-q/2 - q/2\rVert^2 +\lVert m_i-q/2 + q/2\rVert^2\\
    &=2\lVert m_i-q/2\rVert^2 + 2\lVert q/2\rVert^2 - 2\langle m_i - q/2,q/2\rangle + 2\langle m_i - q/2,q/2\rangle \\
    &= 2\lVert m_i-q/2\rVert^2 + \lVert q\rVert^2/2
\end{align*}
and rewrite inequality~(\ref{eq:sum-of-inequalities}) as

$$2\lVert m_i-q/2\rVert^2 +\lVert q\rVert^2/2 \le \Rho^2$$
$$\lVert m_i-q/2\rVert \le \sqrt{\frac{2\Rho^2 - \lVert q\rVert^2}{4}}.$$

The right side of the inequality is maximized for $\lVert q\rVert = 0$. We thus get 

$$\lVert m_i-q/2\rVert \le \frac{1}{\sqrt{2}}\Rho$$

Note that this provides an upper bound on the distance from any midpoint to the point $q/2$. That is, there is a ball of radius at most $\frac{1}{\sqrt{2}}\Rho$ around $q/2$ that includes all $v_i^r =\ballMidOf{\safe_{n-t}( V_i^{r-1})}$, which are the input vectors of the correct nodes for the next algorithm iteration. 

This shows a contraction factor of $\frac{1}{\sqrt{2}}$ of the \ballMid algorithm under synchronous communication.
\end{proof}

Lemma~\ref{lem:synch-contraction-rate} showed that the radius of the \seb of all correct vectors shrinks by at least a factor of $\frac{1}{\sqrt{2}}$  in each iteration. For the radius to reach below $\varepsilon$, $\log_{\sqrt{2}}(\frac{1}{\varepsilon}\cdot\radius(\ballOf{C}))$ iterations of the algorithm are required. This leads to the following corollary:

\begin{corollary}
    Algorithm~\ref{alg:synch-ballmidpoint} converges in $O\left(\log_2 \bigl(\frac{1}{\varepsilon}\cdot c\cdot\radius(\ballOf{C})\bigr)\right)$ such that the output vectors satisfy $\varepsilon$-Agreement.
\end{corollary}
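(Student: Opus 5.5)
The corollary follows from Lemma~\ref{lem:synch-contraction-rate} by iterating the contraction bound and counting the loop iterations of Algorithm~\ref{alg:synch-ballmidpoint}. Write $\Rho_r\coloneqq\radius(\ballOf{C^{r}})$, with $\Rho_0=\radius(\ballOf{C})$.

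First, Lemma~\ref{lem:synch-contraction-rate} gives a ball of radius at most $\Rho_{r-1}/\sqrt{2}$, centred at $q/2$, that contains every new correct value $v_i^r$. Since $\ballOf{C^r}$ is the smallest enclosing ball of exactly these values, its radius is no larger, so $\Rho_r\le \Rho_{r-1}/\sqrt{2}$. By induction on $r$,
\[
\Rho_r\le 2^{-r/2}\,\Rho_0 .
\]
The induction goes through because the hypotheses of Lemma~\ref{lem:synch-contraction-rate} hold in every iteration. Lemmas~\ref{lem:synch-safearea-nonempty} and~\ref{lem:synch-all-safearea-intersect} depend only on $n>\max(3,d+1)t$ and on the properties of consistent broadcast, not on the particular vectors broadcast.

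Second, we count iterations. The loop runs for $R=\log_2\bigl(\tfrac{1}{\varepsilon} c\Rho_0\bigr)/\log_2(\sqrt2)=2\log_2\bigl(\tfrac{1}{\varepsilon} c\Rho_0\bigr)$ iterations, rounded up to an integer. Then
\[
\Rho_R\le 2^{-R/2}\Rho_0\le \frac{\varepsilon}{c\,\Rho_0}\,\Rho_0=\frac{\varepsilon}{c}\le\varepsilon ,
\]
using $c\ge1$. Because the upper bound $c\Rho_0$ is known to all nodes, every correct node computes the same $R$ and stops after the same number of iterations. All outputs $v_i^R$ therefore lie in $\ballOf{C^R}$, which has radius at most $\varepsilon$; this is $\varepsilon$-Agreement in the sense of Definition~\ref{def:MBAA}. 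The iteration count is $2\log_2\bigl(\tfrac{1}{\varepsilon} c\Rho_0\bigr)=O\bigl(\log_2(\tfrac{1}{\varepsilon} c\,\radius(\ballOf{C}))\bigr)$. If $\tfrac{1}{\varepsilon}c\Rho_0\le1$, zero iterations are needed and the inputs already satisfy agreement.

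Third, for completeness, containment is preserved across iterations. By Lemma~\ref{lem:synch-ballmidpoint-in-safearea} and Lemma~\ref{lem:synch-safearea-nonempty}, $v_i^r\in\convexHull(C^{r-1})$. This gives the nested chain $\convexHull(C^r)\subseteq\convexHull(C^{r-1})\subseteq\dots\subseteq\convexHull(C)$, so every output lies in the convex hull of the correct inputs.

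None of these steps is difficult, since the analytic content is already in Lemma~\ref{lem:synch-contraction-rate}. The one point that needs checking is that the per-iteration bound measures the radius of the smallest enclosing ball of the \emph{new correct values}, so that the recursion $\Rho_r\le\Rho_{r-1}/\sqrt2$ is legitimate. It is, because the $v_i^r$ are the inputs of the correct nodes in iteration $r$.
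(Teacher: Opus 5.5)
Your proposal is correct and follows the same route as the paper: you iterate the per-iteration factor $\frac{1}{\sqrt{2}}$ from Lemma~\ref{lem:synch-contraction-rate} and count how many loop iterations are needed for the radius of the smallest enclosing ball of the correct values to drop below $\varepsilon$. You spell out details the paper leaves implicit. These are that the ball around $q/2$ bounds the radius of the smallest enclosing ball of the new correct values, that rounding up the iteration count and using $c\ge 1$ only help, and that containment is preserved along the way.
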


\subsection{Tightness of the contraction bound for Algorithm~\ref{alg:synch-ballmidpoint}}\label{sec:tightness}

In this section, we show that the contraction rate is asymptotically tight for Algorithm~\ref{alg:synch-ballmidpoint}. To this end, we present a construction and a Byzantine attack which shows that the contraction rate of the algorithm can be arbitrarily close to $\frac{1}{\sqrt{2}}$ in every iteration of the algorithm.

\begin{figure}[h]
    \centering
    \includegraphics[width=\textwidth]{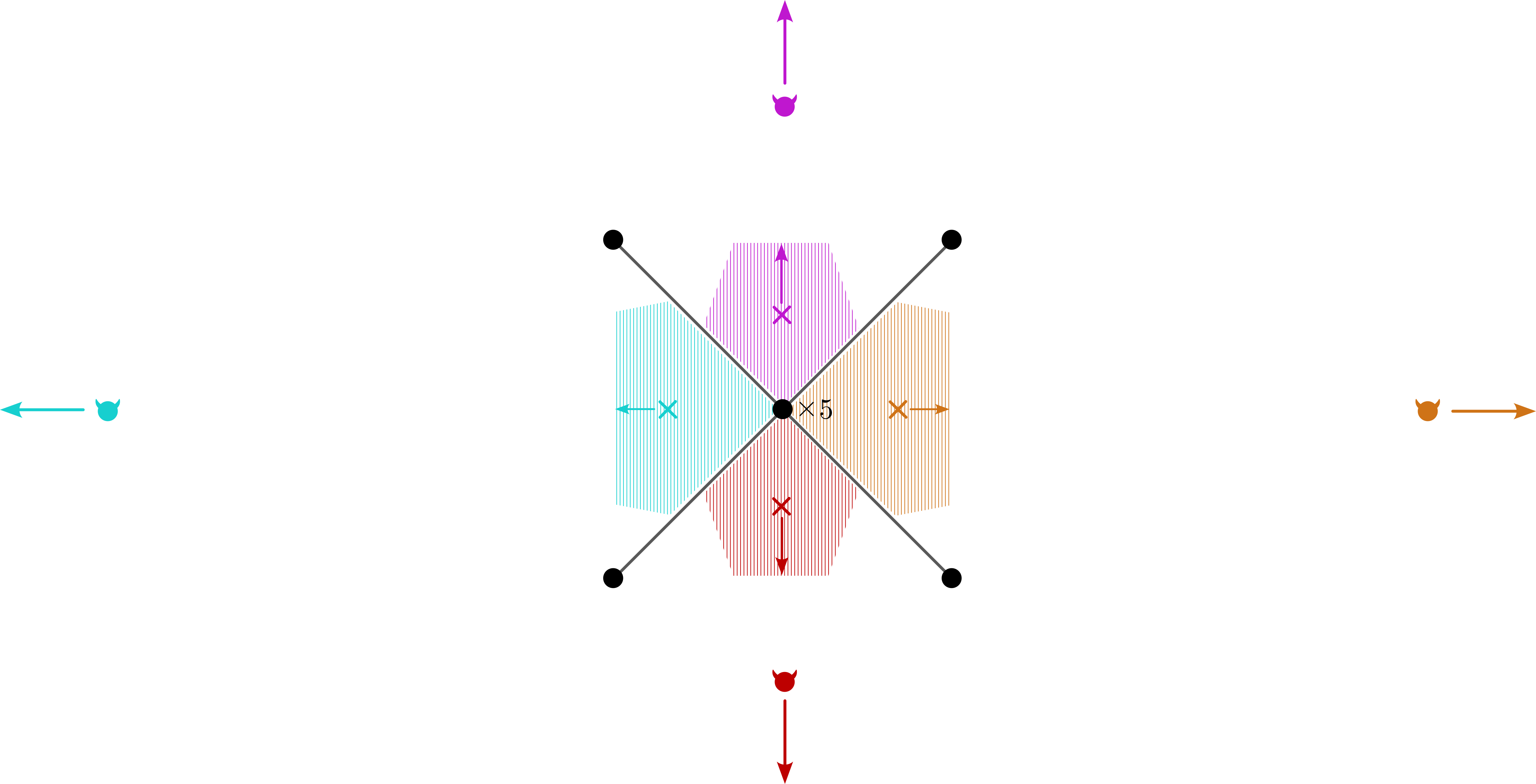}
    \caption{There are $9$ correct and $4$ Byzantine nodes in this construction. $4$ correct vectors form a square, with $5$ correct nodes in the middle, and all correct nodes will accept these vectors. Additionally, each correct node at the vertex of the square accepts one of the four Byzantine vectors. The Byzantine vectors are assumed to be placed far away from the correct vectors in the marked direction. The further the placement of the Byzantine vectors, the closer the next input vector of the correct nodes to the midpoint of a side of the square. The safe areas corresponding to the five different views, one of which is just the origin, are marked in the figure. Note that new input vectors of the correct nodes for the next round are also forming a square from points that are close to the midpoints of the sides. Thus, Byzantine nodes can apply an analogous attack in the next iteration. }
    \label{fig:lower-bound-midpoint-synch}
\end{figure}

\begin{theorem}\label{thm:lower-bound-synch}
    The contraction rate of Algorithm~\ref{alg:synch-ballmidpoint} is at least $\frac{1}{\sqrt{2}}-\delta$, where $\delta$ is an arbitrarily small constant.
\end{theorem}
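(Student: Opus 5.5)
We construct an explicit execution following Figure~\ref{fig:lower-bound-midpoint-synch}. We work in $d=2$ with $t=4$ and $n=13$ (so $n>\max(3,d+1)t=12$), and $f=4$ Byzantine nodes. Place the correct vectors at the vertices $(\pm1,0),(0,\pm1)$ of a square, one correct node at each vertex, and $5$ correct nodes at the origin $O$. Then $\ballOf{C}$ has midpoint $O$ and radius $\Rho=1$. For each vertex $e_k$, one Byzantine node $b_k$ broadcasts a vector placed at distance $L$ (large) from $O$ in a chosen direction, roughly outward and rotated relative to $e_k$. The Byzantine nodes manipulate consistent broadcast (Algorithm~\ref{alg:consistentBC}) so that only the correct node at $e_k$ accepts $b_k$'s vector, and nobody else accepts anything from $b_k$. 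For the rounds to be well defined, we must check that the echo pattern lets $b_k$'s value reach the $n-t$ threshold at exactly one correct node. This works because the $4$ Byzantine nodes can send $S_k$ sets selectively: each correct node echoes only what it received, and Byzantine senders can choose which correct nodes receive their initial value.

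The first step is to compute the five local safe areas. The origin nodes see only the $9$ correct vectors. With $n-t=9$, the safe area is the intersection of all hulls of $9$ of these $9$ points, i.e.\ their hull, since the node computes subsets of size $n-t$ of its accepted vectors. The paper's figure says this view gives just the origin, so I will follow the paper's convention: the safe area is computed over subsets dropping $t$ vectors from the received set. With that convention, dropping $4$ of the $9$ vectors leaves, for instance, the origin copies only, so this safe area is $\{O\}$. For the node at $e_k$, the $10$ accepted vectors include $b_k$. Intersecting hulls of all $9$-subsets (or dropping up to $4$) gives a small triangle with vertices at $O$, roughly $e_k/2$, and the midpoint of a side of the square adjacent to $e_k$. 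That last vertex lies in the direction determined by where $b_k$ sits.

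The second step is the key computation, which I expect to be the main obstacle. We must show that as $L\to\infty$ the ball midpoint of this safe area tends to the midpoint $s_k$ of a square side, $\lVert s_k\rVert=1/\sqrt2$. The rotated choice of $b_k$ must be such that the safe area degenerates to (approximately) the segment from $O$ toward a point near $2s_k$ clipped by the square, so that its smallest enclosing ball is centred near $s_k$. I would first pin down the safe area explicitly as an intersection of a few half-planes determined by lines through pairs of the correct vectors and $b_k$. I would then use the fact that the smallest enclosing ball of a thin triangle is the ball over its longest side to locate the midpoint. Choosing $L$ large enough gives each new vector $v^1_k$ within $\delta'$ of $s_k$, arranged with $90^\circ$ symmetry.

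The third step is iteration. The new correct vectors are four points near $s_1,\dots,s_4$, which form a square of circumradius $\approx1/\sqrt2$, together with the five origin nodes staying at $O$ (their safe area is $\{O\}$). By Lemma~\ref{lem:synch-contraction-rate} the radius is at most $1/\sqrt2$, and here it is at least $1/\sqrt2-\delta'$, so the contraction factor is at least $1/\sqrt2-\delta$. The configuration is a scaled and rotated copy of the original, up to $\delta'$-perturbation. The Byzantine nodes repeat the attack with rotated directions and larger $L$, and by continuity of the safe area and ball midpoint in the inputs, the per-iteration factor stays above $1/\sqrt2-\delta$. The symmetry can be enforced exactly by choosing Byzantine positions symmetric under rotation by $90^\circ$, which makes the perturbation non-accumulating.
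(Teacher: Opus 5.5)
Your setup is the paper's, rotated by $45^\circ$: a square of correct vectors, the remaining correct nodes at $O$, four far Byzantine vectors in the side-midpoint directions, each vertex node accepting one of them, and a rotated repeat of the attack. Your iteration step matches the paper's too. The proposal breaks at the two places where the actual work happens.

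\textbf{The convention you switch to kills the attack.} Suppose the node at $e_k$ drops $t=4$ of its ten vectors. Two of the allowed subsets are then the five copies of $O$ with $b_k$, and the five copies of $O$ with $e_{k+2}$. Their hulls are the segments $[O,b_k]$ and $[O,e_{k+2}]$, which meet only in $O$. So this node's safe area is $\{O\}$, every correct node outputs $O$, and the contraction is $0$. ``9-subsets'' and ``dropping up to 4'' are not interchangeable. Keep the algorithm as written, $\safe_{n-t}$ with $n-t=9$, so each vertex node removes exactly one of its ten vectors; this is what the paper does. You were right that the origin nodes' safe area is then the whole square. That is harmless: the \seb of that square is centred at $O$, so they output $O$ anyway, and no change of convention is needed.

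\textbf{Your picture of the main safe area is wrong, and this is exactly the step you leave open.} Take the triangle $O$, $e_k/2$, $s_k$, or the segment $[O,s_k]$ (which is what ``the segment toward $2s_k$ clipped by the square'' is). Either has its \seb centred at $s_k/2$, which would give a factor of about $1/(2\sqrt2)$, not $1/\sqrt2$. The correct computation, with $b_k$ at distance $L$ in the direction of $s_k$, the midpoint of the side $e_ke_{k+1}$:
\begin{itemize}
\item removing $b_k$ gives the square;
\item removing $e_{k+2}$ or $e_{k+3}$ gives the two half-diagonals through $O$ that bound the sector containing $e_k,e_{k+1}$;
\item removing $e_k$ or $e_{k+1}$ only cuts off an $O(1/L)$ sliver near that vertex;
\item removing a copy of $O$ changes nothing.
\end{itemize}
Hence the safe area converges to the right isosceles triangle $Oe_ke_{k+1}$. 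Its \seb has the side $e_ke_{k+1}$ as diameter and centre $s_k$, with $\lVert s_k\rVert=1/\sqrt2$. Your ``longest side'' fact gives the right answer for this triangle, not for yours. This is the paper's ``triangle bounded by one side and the two diagonals'' together with its explicit midpoint formula.

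\textbf{A minor point on the broadcast.} With acceptance threshold $n-t=9$, $b_k$'s value must reach at least $n-2t=5$ correct nodes in the first step. Selectivity then comes from the Byzantine nodes including it only in the sets $S$ they send to the node at $e_k$. If $b_k$ sent its value to that node alone, nobody would accept it.
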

\begin{proof}
    Consider a construction in $\mathbb{R}^2$ with $f\ge4$ Byzantine and $n>3f$ correct nodes. For larger $d$, the same example can be used where all but two dimensions are $0$. Place input vectors of four correct nodes in the vertices of a square, with coordinates $\left(\frac{1}{\sqrt{2}},\frac{1}{\sqrt{2}}\right), \left(-\frac{1}{\sqrt{2}},\frac{1}{\sqrt{2}}\right), \left(\frac{1}{\sqrt{2}},-\frac{1}{\sqrt{2}}\right)$ and $\left(-\frac{1}{\sqrt{2}},-\frac{1}{\sqrt{2}}\right)$. Place the remaining $n-f-4$ correct nodes at the origin. Note that the \seb of the square is $1$ in this construction. 
    
    We now describe the adversarial strategy. Let four Byzantine nodes choose their inputs to be $(\alpha,0)$, $(-\alpha,0)$, $(0,\alpha)$ and $(0,-\alpha)$, where $\alpha>2$ is a large constant which we will define later. The remaining Byzantine nodes do not broadcast their own input vectors, but may participate in the broadcast of other vectors. Since the Byzantine adversary controls the scheduling of the messages, it is possible for the adversary to make sure that four correct nodes each accept one different input vector from a Byzantine node. The remaining $n-f-4$ correct nodes do not accept any Byzantine vectors.

    The four correct nodes that receive Byzantine inputs receive $n-f+1$ values in total and therefore remove one vector to compute the safe areas. The four different safe areas all intersect in the origin. We will refer to these safe areas as the main safe areas. The areas are bounded by the diagonals of the square, and, for $\alpha \rightarrow \infty$, they converge to triangles bounded by one side and the two diagonals of the square. The remaining group of $n-f-4$ correct nodes has a safe area that contains only the origin. See Figure~\ref{fig:lower-bound-midpoint-synch} for an example with $13$ nodes. 
    Observe that for $\alpha > 2$, the midpoints of the \seb of the main safe areas are $\left(\frac{\alpha}{\sqrt{2}\alpha + 2}, 0 \right)$, $\left(-\frac{\alpha}{\sqrt{2}\alpha + 2}, 0 \right)$, $\left(0, \frac{\alpha}{\sqrt{2}\alpha + 2}\right)$ and $\left(0, -\frac{\alpha}{\sqrt{2}\alpha + 2} \right)$. 

    For the radius of the new \seb of correct vectors, we have 
    \begin{align*}
        \frac{\alpha}{\sqrt{2}\alpha +2} = \frac{1}{\sqrt{2}}\cdot\frac{1}{1+\frac{\sqrt{2}}{\alpha}} = \frac{1}{\sqrt{2}}\cdot\left(1-\frac{\sqrt{2}}{\alpha} + O\left(\frac{1}{\alpha^2}\right)\right) = \frac{1}{\sqrt{2}} - \frac{1}{\alpha} + O\left(\frac{1}{\alpha^2}\right).
    \end{align*}
    Note that we used the Taylor series to expand the second term in the second equality. 
    By choosing $\alpha > \frac{1}{\delta}$, we get that the radius of the \seb after one algorithm iteration is at least $\frac{1}{\sqrt{2}}-\delta$.

    Observe that the inputs for the correct nodes in the next iteration are geometrically the same as in the initial construction up to rotation and scaling (corresponding to the contraction factor). Thus, we can repeat the same attack where the Byzantine nodes rotate their input vectors to $(\alpha,\alpha)$, $(-\alpha,\alpha)$, $(\alpha,-\alpha)$, and $(-\alpha,-\alpha)$. By iteratively repeating this process, we receive an instance where the contraction factor of Algorithm~\ref{alg:synch-ballmidpoint} is at least $\frac{1}{\sqrt{2}}-\delta$ in every iteration.
\end{proof}

\section{Asynchronous multidimensional approximate agreement}\label{sec:asynch-MBAA}

In this section, we present the asynchronous algorithm solving MBAA with a contraction rate of $\frac{1}{\sqrt{2}}$. The procedure is similar to that in the synchronous case, and we will therefore reuse many of the previously stated results. The first difference is that we use the Gather protocol to communicate the vectors in each iteration, as consistent broadcast is not sufficient to guarantee nonempty intersection of all local safe areas. The second difference is the computation of the safe area. In the synchronous case, all nodes receive at least $n-t$ correct vectors. In the asynchronous case, we can only guarantee that the nodes accept a common core of $n-t$ vectors. Additionally, depending on the exact number of accepted vectors, a node can compute a safe area using from $n-2t$ vectors (if it accepts $n-t$ vectors) up to $n-t$ vectors (if it accepts $n$ vectors). That is, the safe area can be computed by removing subsets of $t$ nodes rather than using subsets of $n-2t$ nodes. Algorithm~\ref{alg:asynch-ballmidpoint} presents this idea as pseudocode. As in the synchronous case, we first assume that an upper bound on the \seb of all correct input vectors is known to all nodes, and present the round estimation preprocessing step in Section~\ref{sec:round-estimation}.

\begin{algorithm}[tbh]
\caption{Asynchronous MBAA with \ballMid for $n>\max(3,d+2)t$}
\label{alg:asynch-ballmidpoint}
\begin{algorithmic}[1]
    \State Each node $i \in [n]$ with input vector $v_i^0$ executes the following code:
    \Indent
        \For{ $r=1,2,\ldots,\log_2 \bigl(\frac{1}{\varepsilon}\cdot c\cdot\radius(\ballOf{C})\bigr)/\log_2(\sqrt{2})$}
            \State Execute the Gather protocol using current value $v_i^{r-1}$ 
            \State Store all accepted values in $V_i^{r-1} = \{v_j^{r-1}, j\in [n]\}$
            \State Let $k\coloneqq |V_i^{r-1}| -(n-t)$
            \State $v_i^r \leftarrow \ballMid(\safe_{n-2t+k}(V_i^{r-1}))$\label{line:asynch-ball}
        \EndFor
    \EndIndent
\end{algorithmic}
\end{algorithm}

\subsection{Correctness of Algorithm~\ref{alg:asynch-ballmidpoint}}

The following theorem states the correctness and runtime of Algorithm~\ref{alg:asynch-ballmidpoint}.

\begin{theorem}\label{thm:asynch-contraction}
    Algorithm~\ref{alg:asynch-ballmidpoint} solves MBAA according to Definition~\ref{def:MBAA} in the asynchronous communication model in $\log_2 \bigl(\frac{1}{\varepsilon}\cdot c\cdot\radius(\ballOf{C})\bigr)$ iterations when $n>\max(3,d+2)t$, where $C$ is the set of correct input vectors. The contraction rate in every iteration is upper bounded by $\frac{1}{\sqrt{2}}$.
\end{theorem}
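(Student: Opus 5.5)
The proof follows the synchronous analysis, changing only the steps that relied on consistent broadcast. All of them will be re-derived from the Gather primitive and the modified safe-area parameter $\ell = n-2t+k$, where $k = |V_i^{r-1}|-(n-t)$.

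\textbf{Step 1: non-emptiness and containment.} We prove the asynchronous analogue of Lemma~\ref{lem:synch-safearea-nonempty}. Node $i$ accepts $n-t+k$ vectors and intersects convex hulls of all subsets of size $n-2t+k$, i.e.\ it removes $t$ vectors at a time.
\begin{itemize}
\item Any $d+1$ such hulls together omit at most $(d+1)t$ of the $n-t+k$ vectors. Since $n-t+k-(d+1)t\ge n-(d+2)t>0$, every $d+1$ of them share a vector, and Helly's theorem gives a non-empty intersection.
\item Since at most $t$ accepted vectors are Byzantine, one subset consists only of correct vectors. Hence the safe area lies in $\convexHull(C^{r-1})$.
\end{itemize}
Validity and consistency of Gather guarantee that the correct vectors a node accepts are the actual current values, so convex validity follows as in \cite{mendes2015multidimensional}. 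Lemma~\ref{lem:synch-ballmidpoint-in-safearea} is purely geometric and carries over unchanged, so $v_i^r$ lies in the safe area and Containment holds inductively. Lemmas~\ref{lem:safe-in-trueball} and~\ref{lem:synch-diameter-included} then follow verbatim, since they only use that the safe area is inside $\convexHull(C^{r-1})$.

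\textbf{Step 2: common intersection point (main obstacle).} This replaces Lemma~\ref{lem:synch-all-safearea-intersect}. Let $S$ be the common core of size $n-t$. I would show that
$$\safe_{n-2t}(S)\subseteq \safe_{n-2t+k}(V_i^{r-1})\quad\text{for every correct } i.$$
Take any subset $I\subseteq V_i^{r-1}$ obtained by removing $t$ vectors. Then $I\cap S$ is obtained from $S$ by removing at most $t$ vectors, so it contains a subset $J\subseteq S$ of size $n-2t$. This gives
$$\convexHull(J)\subseteq \convexHull(I).$$
Intersecting over all $I$ shows that the global safe area is contained in each local one. Consistency of Gather is needed here so that identical sender IDs carry identical vectors across nodes, which makes $S$ a well-defined set of points.

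Non-emptiness of $\safe_{n-2t}(S)$ follows from the Helly argument of Step 1 with $k=0$: it omits at most $(d+1)t$ vectors, and $n-t-(d+1)t>0$ because $n>(d+2)t$. The delicate points are the bookkeeping of removing exactly $t$ vectors versus keeping $n-2t+k$ vectors, and the fact that $S$ may contain up to $t$ Byzantine vectors. The containment argument never needs $S$ to consist of correct vectors, so the latter is harmless.

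\textbf{Step 3: contraction and termination.} With a common point $q$ in all safe areas, and hence in all the balls $\ballOf{\safe(V_i^{r-1})}$, the computation of Lemma~\ref{lem:synch-contraction-rate} applies word for word. Inequalities~(\ref{eq:distance-to-intersection-point}) and~(\ref{eq:kinda-pythagoras}) give
$$\lVert m_i - q/2\rVert\le \Rho/\sqrt2,$$
so the radius of the smallest enclosing ball of correct values shrinks by a factor $1/\sqrt2$ per iteration. Termination follows because each Gather instance terminates for $n>3t$. The stated iteration count (logarithm base $\sqrt2$, i.e.\ the loop bound) then yields $\varepsilon$-Agreement exactly as in the corollary of the synchronous section.
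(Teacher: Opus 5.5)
Your proposal is correct and follows the paper's proof essentially step for step. You carry over the non-emptiness/containment, midpoint-in-safe-area and hemisphere lemmas, show that the common-core safe area $\safe_{n-2t}(S)$ is contained in every local $\safe_{n-2t+k_i}(V_i^{r-1})$ (the paper's Lemma~\ref{lem:asynch-all-safearea-intersect}), and reuse the $q/2$ computation of Lemma~\ref{lem:synch-contraction-rate}. Your explicit Helly count and your subset-by-subset containment argument are more carefully written than the paper's versions. Your reading of the iteration count as the $\log_{\sqrt{2}}$ loop bound matches the paper's corollary.
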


We will prove Theorem~\ref{thm:asynch-contraction} by first focusing on the correctness. Note that the results from Lemma~\ref{lem:synch-safearea-nonempty} and Lemma~\ref{lem:synch-ballmidpoint-in-safearea} can be extended to the asynchronous setting: 

\begin{lemma}\label{lem:asynch-safearea-nonempty}
    The safe areas $\safe_{n-2t+k}(V_i^{r-1})$ computed by each node in Line~\ref{line:asynch-ball} of Algorithm~\ref{alg:asynch-ballmidpoint} are non-empty for $n>\max(3,d+2)t$. Additionally, $\safe_{n-2t+k}(V_i^{r-1})\subseteq\convexHull(C^{r-1})\ \forall i\in Corr$
\end{lemma}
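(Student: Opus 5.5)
The plan is to mirror the argument behind Lemma~\ref{lem:synch-safearea-nonempty}: a Helly-type counting argument for non-emptiness, and an ``all-correct subset'' argument for containment. Fix a correct node $i$ and iteration $r-1$. Let $N\coloneqq|V_i^{r-1}|$ with $n-t\le N\le n$, so $k=N-(n-t)$ and the safe area is taken over subsets of size $\ell\coloneqq n-2t+k=N-t$. Thus $\safe_{n-2t+k}(V_i^{r-1})$ is the intersection of the convex hulls of all subsets obtained by deleting exactly $t$ of the $N$ accepted vectors.

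\textbf{Non-emptiness.} I will apply Helly's theorem in dimension $d$. Each hull is convex, and the family is finite, so it suffices to show that any $d+1$ of the hulls share a point. Take $d+1$ index sets $I_1,\dots,I_{d+1}$, each of size $N-t$. Each misses exactly $t$ vectors, so their intersection contains at least
\[
N-(d+1)t\;\ge\; n-t-(d+1)t \;=\; n-(d+2)t\;>\;0
\]
vectors. The last inequality is exactly the assumption $n>(d+2)t$. Hence the $d+1$ hulls share a common vector of $V_i^{r-1}$, and Helly's theorem gives a non-empty intersection. Here $N\ge n-t$ comes from the Gather primitive, since every correct node outputs at least the common core of size $n-t$. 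When $n>3t$ dominates the maximum, i.e.\ $d=1$, the same count $n-3t>0$ applies.

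\textbf{Containment.} Among the $N$ accepted vectors, at most $t$ come from Byzantine senders. This uses the consistency property of Gather: one value per sender ID, so at most $t$ IDs are faulty. By validity of Gather, every accepted vector from a correct sender $j$ equals $v_j^{r-1}\in C^{r-1}$. Deleting a set of $t$ vectors that contains all Byzantine ones (padding with correct ones if fewer than $t$ are Byzantine) leaves a subset $I^*$ of size $N-t$ consisting only of correct vectors. The safe area is contained in $\convexHull(\{v_j:j\in I^*\})\subseteq\convexHull(C^{r-1})$.

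\textbf{Main obstacle.} The delicate step is the bookkeeping that ties $\ell=n-2t+k$ to the number $N$ of accepted vectors. Both arguments need to remove exactly $t$ vectors. Non-emptiness needs the lower bound $N\ge n-t$ from the common core. Containment needs $\ell\ge 1$ and that at most $t$ accepted vectors are Byzantine, which relies on Gather's validity and consistency rather than on the synchronous consistent broadcast. I would also state explicitly that the resilience $n>\max(3,d+2)t$, not $n>\max(3,d+1)t$, is what makes the Helly count work, because each node may be missing up to $t$ correct vectors.
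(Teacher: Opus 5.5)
Your proposal is correct and follows essentially the same route as the paper, which defers to the synchronous argument of Mendes et al.: show that any $d+1$ of the hulls share a vector, because $N-(d+1)t\ge n-(d+2)t>0$, apply Helly's theorem, and get containment from an all-correct subset of size $N-t$. One small precision: the fact that $V_i^{r-1}$ holds at most one value per sender follows from the reliable broadcast inside Gather and the per-ID indexing of $V_i^{r-1}$, not from Gather's cross-node consistency property.
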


The proof for this lemma can be derived analogously to the proof of Lemma~\ref{lem:synch-safearea-nonempty}. Note that the safe area computation used in Algorithm~\ref{alg:asynch-ballmidpoint} is analogous to the same step in~\cite{mendes2015multidimensional}. 

\begin{lemma}\label{lem:asynch-ballmidpoint-in-safearea}
    The ball midpoint $\ballMidOf{\safe_{n-2t+k}(V_i^{r-1})}$ computed by each node in Line~\ref{line:asynch-ball} of Algorithm~\ref{alg:asynch-ballmidpoint} is inside the safe area $\safe_{n-2t+k}(V_i^{r-1})$ of the respective node. 
\end{lemma}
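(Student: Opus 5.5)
The plan is to reuse the argument of Lemma~\ref{lem:synch-ballmidpoint-in-safearea} almost verbatim. That argument used only two facts: the set in question is a nonempty, closed, convex set, and its smallest enclosing ball is unique and well defined. It used nothing specific to the parameter $n-t$ or to the synchronous model.

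\textbf{Step 1: the hypotheses.} Let $S_i\coloneqq\safe_{n-2t+k}(V_i^{r-1})$.
\begin{itemize}
    \item \emph{Nonempty:} this is exactly Lemma~\ref{lem:asynch-safearea-nonempty}.
    \item \emph{Convex:} $S_i$ is an intersection of convex hulls $\convexHull(\{v_j : j\in I\})$ with $|I|=n-2t+k$.
    \item \emph{Compact:} each such hull is a polytope spanned by finitely many vectors. The intersection is nonempty, so $S_i$ is a nonempty compact convex set.
\end{itemize}
Hence $\ballOf{S_i}$ exists and is unique. Note also that $k$ is well defined and $0\le k\le t$, because the common core of Gather gives $|V_i^{r-1}|\ge n-t$. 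So $S_i$ is a legitimate safe area as in the definition, with $\ell=n-2t+k<|V_i^{r-1}|$.

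\textbf{Step 2: contradiction.} Let $m_i\coloneqq\ballMidOf{S_i}$ with radius $\rho_i$, and suppose $m_i\notin S_i$.
\begin{itemize}
    \item Since $S_i$ is closed and convex, there is a hyperplane $H$ strictly separating $m_i$ from $S_i$. Let $u$ be its unit normal pointing towards $S_i$, so $\langle x-m_i,u\rangle\ge\eta>0$ for all $x\in S_i$.
    \item Then $S_i$ lies in the cap $\{x\in\ballOf{m_i,\rho_i} : \langle x-m_i,u\rangle\ge\eta\}$.
    \item For a point $x$ in this cap, $\lVert x-(m_i+\eta u)\rVert^2=\lVert x-m_i\rVert^2-2\eta\langle x-m_i,u\rangle+\eta^2\le\rho_i^2-\eta^2$.
    \item So the ball around $m_i+\eta u$ of radius $\sqrt{\rho_i^2-\eta^2}<\rho_i$ encloses $S_i$. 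This contradicts minimality of $\ballOf{S_i}$.
\end{itemize}
This computation is a slightly more explicit version of the cap argument in the synchronous proof. The degenerate case $\rho_i=0$ is immediate: then $S_i=\{m_i\}$.

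\textbf{Main obstacle.} There is essentially none. The only point needing care is that the new parameter $n-2t+k$ still yields a nonempty compact convex set, and Lemma~\ref{lem:asynch-safearea-nonempty} supplies nonemptiness. Everything else is model-independent, so the lemma follows exactly as Lemma~\ref{lem:synch-ballmidpoint-in-safearea} does.
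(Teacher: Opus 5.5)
Your proposal is correct and follows the paper's route: the paper simply reuses the synchronous proof (a separating hyperplane, then a strictly smaller ball that encloses the cap containing the safe area). Your ball centered at $m_i+\eta u$ with radius $\sqrt{\rho_i^2-\eta^2}$ is exactly the paper's $B^*$ for the hyperplane at distance $\eta$ from $m_i$, and your checks of nonemptiness via Lemma~\ref{lem:asynch-safearea-nonempty}, compactness, and $0\le k\le t$ fill in details the paper leaves implicit.
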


This lemma is identical to Lemma~\ref{lem:synch-ballmidpoint-in-safearea} and thus the proof of Lemma~\ref{lem:synch-ballmidpoint-in-safearea} also applies here.
This concludes the convergence of Algorithm~\ref{alg:asynch-ballmidpoint}. In the following, we focus on $\varepsilon$-Agreement:

\begin{lemma}\label{lem:asynch-safe-in-trueball}
    All locally computed safe areas $\safe_{n-2t+k}(V_i^{r-1}), i\in Corr$ by the correct nodes in an iteration of Algorithm~\ref{alg:asynch-ballmidpoint} are inside the \seb of all correct vectors in that iteration. That is, $\safe_{n-2t+k}(V_i^{r-1})\subseteq\ballOf{C^{r-1}}.$
\end{lemma}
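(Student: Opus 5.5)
The plan mirrors the proof of Lemma~\ref{lem:safe-in-trueball}, replacing the synchronous containment result with its asynchronous counterpart. The argument is a chain of two inclusions for a fixed correct node $i$ and a fixed iteration $r-1$.

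First, I will invoke Lemma~\ref{lem:asynch-safearea-nonempty}, which gives $\safe_{n-2t+k}(V_i^{r-1})\subseteq\convexHull(C^{r-1})$ for every $i\in Corr$. The justification is as follows. Node $i$ accepts $|V_i^{r-1}| = n-t+k$ vectors. Among these, at most $t$ come from Byzantine nodes; by the validity and consistency properties of Gather, every accepted vector from a correct sender is that sender's true current value. Removing all Byzantine vectors leaves at least $n-2t+k$ correct vectors. Hence at least one index set $I$ of size $n-2t+k$ used in the intersection defining $\safe_{n-2t+k}(V_i^{r-1})$ consists only of correct vectors. The safe area is contained in $\convexHull(\{v_j: j\in I\})\subseteq\convexHull(C^{r-1})$.

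Second, I will use the fact that $\ballOf{C^{r-1}}$ is a closed ball, hence convex, and contains $C^{r-1}$. By the definition of the convex hull as the smallest convex superset, this gives $\convexHull(C^{r-1})\subseteq\ballOf{C^{r-1}}$. Chaining the two inclusions yields the claim.

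The only step needing care is the counting in the first step, namely that the subset size $n-2t+k$ never exceeds the number of correct vectors node $i$ holds. This holds because $k$ was defined exactly so that $n-t+k-t = n-2t+k$. Everything else is immediate from earlier results.
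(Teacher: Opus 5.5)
Your proposal is correct and follows the paper's route: the paper proves this lemma by analogy with Lemma~\ref{lem:safe-in-trueball}, chaining $\safe_{n-2t+k}(V_i^{r-1})\subseteq\convexHull(C^{r-1})$ from Lemma~\ref{lem:asynch-safearea-nonempty} with $\convexHull(C^{r-1})\subseteq\ballOf{C^{r-1}}$. Your count of at least $n-2t+k$ correct vectors among the $n-t+k$ accepted ones, together with Gather validity, is the same justification the paper sketches for the containment part of that earlier lemma, so nothing is missing.
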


The proof of this lemma is analogous to the proof of Lemma~\ref{lem:safe-in-trueball}. Also the following lemma can be proven analogously to Lemma~\ref{lem:synch-diameter-included}:

\allowdisplaybreaks
\begin{lemma}\label{lem:asynch-diameter-included}
    Let $m$ be the midpoint of $\ballOf{C^{r-1}}$, and $m_i\coloneq \ballMidOf{\safe_{n-2t+k}(V_i^{r-1})}$ denote the midpoint of the \seb of the locally computed safe area of node $i$. Let $\overrightarrow{m,m_i}$ be the line going through the midpoint of the two balls. $\ballOf{C^{r-1}}$ includes the diameters of $\ballOf{\safe_{n-2t+k}(V_i^{r-1}))}$ that are orthogonal to $\overrightarrow{m,m_i}$.
\end{lemma}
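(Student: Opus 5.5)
The plan is to reduce the statement to the inequality
\begin{align*}
\lVert m_i - m\rVert^2 + \rho_i^2 \le \Rho^2,
\end{align*}
where $\rho_i\coloneqq\radius(\ballOf{\safe_{n-2t+k}(V_i^{r-1})})$ and $\Rho\coloneqq\radius(\ballOf{C^{r-1}})$. This inequality suffices. Every endpoint $x$ of a diameter of $\ballOf{\safe_{n-2t+k}(V_i^{r-1})}$ orthogonal to $\overrightarrow{m,m_i}$ satisfies $\lVert x-m\rVert^2=\lVert m_i-m\rVert^2+\rho_i^2$ by Pythagoras. The whole diameter is then contained in $\ballOf{C^{r-1}}$ by convexity. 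I would follow the structure of Lemma~\ref{lem:synch-diameter-included}. However, I would replace its ``shrink the intersection'' step with a contact-point argument, because that is where the real content sits and it gives the inequality directly.

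The only facts needed from the asynchronous setting are these. The safe area $S_i\coloneqq\safe_{n-2t+k}(V_i^{r-1})$ is non-empty by Lemma~\ref{lem:asynch-safearea-nonempty}. It is compact and convex, being a finite intersection of convex hulls of finitely many points. Finally, $S_i\subseteq\ballOf{C^{r-1}}$ by Lemma~\ref{lem:asynch-safe-in-trueball}. The degenerate case $\rho_i=0$ is trivial. Otherwise let $K_i\coloneqq\{p\in S_i:\lVert p-m_i\rVert=\rho_i\}$ be the set of contact points. By compactness, $K_i$ is non-empty and closed.

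The key step, and the main obstacle, is the standard characterization that $m_i\in\convexHull(K_i)$. I would prove it by contradiction. If $m_i\notin\convexHull(K_i)$, a hyperplane strictly separates $m_i$ from the compact set $\convexHull(K_i)$. Let $u$ be the unit normal pointing towards $K_i$. Moving the center to $m_i+\eta u$ for small $\eta>0$ strictly decreases the distance to every point of $K_i$. By compactness, it also keeps the points of $S_i$ that lie a fixed margin away from the boundary sphere strictly inside radius $\rho_i$. Hence some ball of radius $<\rho_i$ encloses $S_i$, contradicting minimality. The care needed is a uniform margin: split $S_i$ into a small neighbourhood of $K_i$, where the inner product with $u$ stays positive, and its complement, where distances are at most $\rho_i-\gamma$ for some $\gamma>0$.

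Given $m_i\in\convexHull(K_i)$, write $w\coloneqq m_i-m$. Since $m_i$ is a convex combination of contact points, $\sum_j\lambda_j\langle p_j-m_i,w\rangle=0$ for suitable weights $\lambda_j$. Hence some $p\in K_i$ has $\langle p-m_i,w\rangle\ge 0$. Then
\begin{align*}
\lVert p-m\rVert^2=\lVert p-m_i\rVert^2+\lVert w\rVert^2+2\langle p-m_i,w\rangle\ge\rho_i^2+\lVert m_i-m\rVert^2.
\end{align*}
Since $p\in S_i\subseteq\ballOf{C^{r-1}}$, we have $\lVert p-m\rVert\le\Rho$, which yields the target inequality. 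This argument never uses which communication primitive produced $V_i^{r-1}$. It therefore also justifies the synchronous analogue, and it is exactly the inequality~(\ref{eq:kinda-pythagoras}) needed later in the asynchronous contraction proof.
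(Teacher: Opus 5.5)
Your argument is correct, and it takes a different route from the paper's.

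The paper proves this lemma only by pointing to Lemma~\ref{lem:synch-diameter-included}, whose proof is qualitative. It observes that $m_i\in\ballOf{C^{r-1}}$ and uses rotational symmetry about $\overrightarrow{m,m_i}$ to reduce to a single orthogonal diameter. It then sets aside the cases where one ball lies inside the other. Finally, it argues by contradiction that if no orthogonal diameter lies in $\ballOf{C^{r-1}}$, the lens $\ballOf{\safe_{n-2t+k}(V_i^{r-1})}\cap\ballOf{C^{r-1}}$, which still contains the safe area, has a strictly smaller enclosing ball. Inequality~(\ref{eq:kinda-pythagoras}) is only extracted afterwards, in the contraction proof.

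You instead establish the optimality condition $m_i\in\convexHull(K_i)$ and pick a contact point on the far side of $m_i$ from $m$. This yields (\ref{eq:kinda-pythagoras}) in one line and makes the symmetry step and case split unnecessary, including the case $m_i=m$, where the line is undefined. It also does not use Lemma~\ref{lem:asynch-ballmidpoint-in-safearea}; in fact, since $\convexHull(K_i)\subseteq S_i$, it reproves it.

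What your route buys is rigor at the crucial step. The paper infers a smaller enclosing ball from the lens having smaller diameter than $\ballOf{\safe_{n-2t+k}(V_i^{r-1})}$, and that inference is not justified by the diameter alone. To complete the paper's route, one must check the position of the hyperplane containing the intersection of the two spheres. Under the contradiction hypothesis, and using $\rho_i\le\Rho$, it lies strictly between $m_i$ and $m$, so the lens is a union of two minor caps inside a ball of radius below $\rho_i$.

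The paper's route needs only one explicit competitor ball and elementary geometry. Yours pays for the contact-point characterization, and you carry out its perturbation proof correctly via the uniform-margin compactness split.
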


In the following lemma, we show that all locally computed safe areas have a non-empty intersection also in the asynchronous case. Here, the proof differs from the synchronous setting, as the safe areas are computed in a slightly different way. 

\begin{lemma}\label{lem:asynch-all-safearea-intersect}
    Let $\safe_{n-2t+k_i}(V_i^{r-1}), i\in Corr$ be the locally computed safe areas by correct nodes in iteration $r-1$ of Algorithm~\ref{alg:asynch-ballmidpoint}, where $k_i\coloneqq |V_i^{r-1}| -(n-t)$. Then, 
    $$\bigcap_{i\in Corr}\safe_{n-2t+k_i}(V_i^{r-1})\neq \varnothing.$$
\end{lemma}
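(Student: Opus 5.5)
Mirroring Lemma~\ref{lem:synch-all-safearea-intersect}, I will exhibit a single nonempty convex set contained in every local safe area, built from the common core $S$ given by Gather. By the common core property, $|S| = n-t$ and $S\subseteq V_i^{r-1}$ for every $i\in Corr$. By the consistency property, any sender whose vector appears in two correct outputs has the same vector in both. I therefore define the \emph{global core view} $G^{r-1}\coloneqq S$, a multiset of $n-t$ vectors. At most $t$ of them are Byzantine, so at least $n-2t$ are correct.

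I then set $\safe_{n-2t}(G^{r-1})$ as the candidate set and prove two things about it. \textbf{Nonemptiness.} Its defining convex hulls are those of $(n-2t)$-subsets of $n-t$ points, i.e.\ we remove $t$ points out of $n-t$. Under $n>\max(3,d+2)t$ we have $n-t>(d+1)t$, so any $d+1$ such hulls share one of the $(n-t)-(d+1)t\ge 1$ points that none of them removes. Helly's theorem then gives a nonempty intersection, exactly as in the argument behind Lemma~\ref{lem:synch-safearea-nonempty} and Lemma~\ref{lem:asynch-safearea-nonempty}.

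\textbf{Containment.} I need $\safe_{n-2t}(G^{r-1})\subseteq\safe_{n-2t+k_i}(V_i^{r-1})$ for each correct $i$. Node $i$ holds $|V_i^{r-1}| = n-t+k_i$ vectors, and its safe area intersects the hulls of all subsets $I\subseteq V_i^{r-1}$ with $|I| = n-2t+k_i$, i.e.\ obtained by removing exactly $t$ vectors. Fix such an $I$. The removed set $R = V_i^{r-1}\setminus I$ has size $t$, so $I\cap S = S\setminus R$ has size at least $n-2t$. Choose any $(n-2t)$-subset $J\subseteq I\cap S$. Then $\convexHull(J)\subseteq \convexHull(I)$, and $\convexHull(J)$ is one of the hulls intersected in $\safe_{n-2t}(G^{r-1})$. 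Hence $\safe_{n-2t}(G^{r-1})\subseteq\convexHull(I)$ for every such $I$, and taking the intersection over all $I$ yields the containment. Since this holds for every correct $i$, the intersection of all local safe areas contains the nonempty set $\safe_{n-2t}(G^{r-1})$, which proves the lemma.

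The main obstacle is the bookkeeping, not the geometry. I must ensure that vectors are identified per sender, using consistency, so that $S$ is literally a sub-multiset of each $V_i^{r-1}$ even when Byzantine senders equivocate outside the core. I must also check that the parameter $n-2t+k_i$ corresponds to removing exactly $t$ vectors for every value of $k_i\in[0,t]$. Both follow from the Gather properties and the definition of $k_i$.
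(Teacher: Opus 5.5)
Your proof is correct and follows the paper's argument. The paper also takes $G^{r-1}$ to be $n-t$ vectors of the Gather common core, and shows $\safe_{n-2t}(G^{r-1})\subseteq\safe_{n-2t+k_i}(V_i^{r-1})$ by observing that every $(n-2t+k_i)$-subset of $V_i^{r-1}$, intersected with $G^{r-1}$ (i.e.\ with the at most $k_i$ out-of-core vectors discarded), still contains an $(n-2t)$-subset of $G^{r-1}$; the only minor difference is that you prove nonemptiness of $\safe_{n-2t}(G^{r-1})$ directly via the Helly counting argument, whereas the paper just notes that $G^{r-1}$ could be the view of a correct node that accepted exactly $n-t$ vectors, implicitly invoking Lemma~\ref{lem:asynch-safearea-nonempty}.
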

\begin{proof}
Consider iteration $r-1$ of the algorithm. By definition of Gather, there is a common core of at least $n-t$ vectors that is included in every local set $V_i^{r-1}$. Similar to the proof in the synchronous case, we define a global view $G^{r-1}$. In this case, we define the global view to  only contain $n-t$ vectors from the common core. If the common core has a larger number of vectors, we deterministically remove the vectors with the smallest IDs until $G^{r-1}$ only has $n-t$ vectors. Let $\safe_{n-2t}(G^{r-1})$ be the global safe area. Note that this set could potentially be the view of a correct node in the protocol, and thus $\safe_{n-2t}(G^{r-1})\neq\varnothing$. We will show that the global safe area is contained in every locally computed safe area.

Consider a correct node $i$ with the safe area $\safe_{n-2t+k_i}(V_i^{r-1})$. Let $U_\ell, \ell\in\left[\binom{n-t+k_i}{n-2t+k_i}\right]$ be subsets of $n-2t+k_i$ nodes used to compute the safe area of $V_i^{r-1}$, and $F_\lambda, \lambda\in\left[\binom{n-t}{n-2t}\right]$ be subsets used to compute the safe area of $G^{r-1}$. Let $D_i=V_i^{r-1}\setminus G^{r-1}$. Consider the sets $U_\ell\setminus D_i$. 
We have $|D_i|\le k_i$ as $V_i^{r-1}$ includes the common core and thus also $G^{k-1}$, and therefore $|U_\ell\setminus D_i|\ge n-2t$. 
Note that $\convexHull(U_\ell\setminus D_i)\subseteq \convexHull(U_\ell)\ \forall \ell$ as the convex hull is computed with potentially fewer vectors. On the other hand, for each set $U_\ell$, there exists a set $F_\lambda$, such that $F_\lambda\subseteq U_\ell\setminus D_i$. This is because $G^{r-1} = V_i^{r-1}\setminus D_i$ and $|V_i^{r-1}\setminus D_i|\ge n-2t$. Thus $\safe_{n-2t}(G^{r-1})=\bigcap_\lambda F_\lambda \subseteq \bigcap_\ell (U_\ell\setminus D_i) \subseteq \bigcap_\ell U_\ell=\safe_{n-2t+k_i}(V_i^{r-1})$. This concludes the proof.
\end{proof}

Given Lemma~\ref{lem:asynch-all-safearea-intersect}, the contraction rate can be proven:

\begin{lemma}\label{lem:asynch-contraction-rate}
    The contraction rate of Algorithm~\ref{alg:asynch-ballmidpoint} in every iteration is upper bounded by $\frac{1}{\sqrt{2}}$.
\end{lemma}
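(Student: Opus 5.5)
The plan is to replay the proof of Lemma~\ref{lem:synch-contraction-rate} verbatim, substituting the asynchronous ingredients. Fix iteration $r-1$ and translate so that the midpoint of $\ballOf{C^{r-1}}$ is the origin $O$. For each $i\in Corr$ set $m_i\coloneqq\ballMidOf{\safe_{n-2t+k_i}(V_i^{r-1})}$ and $\rho_i\coloneqq\radius(\ballOf{\safe_{n-2t+k_i}(V_i^{r-1})})$. Let $\Rho\coloneqq\radius(\ballOf{C^{r-1}})$.

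Lemma~\ref{lem:asynch-all-safearea-intersect} supplies a common point $q$ of all local safe areas. Since each safe area lies inside its own smallest enclosing ball, this gives $\lVert m_i-q\rVert^2\le\rho_i^2$ for every correct $i$, which is the analogue of inequality~(\ref{eq:distance-to-intersection-point}).

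Next I use Lemma~\ref{lem:asynch-diameter-included}. It says that the diameters of $\ballOf{\safe_{n-2t+k_i}(V_i^{r-1})}$ orthogonal to $\overrightarrow{O,m_i}$ lie in $\ballOf{C^{r-1}}$. An endpoint of such a diameter is at distance $\sqrt{\lVert m_i\rVert^2+\rho_i^2}$ from $O$, so Pythagoras gives $\lVert m_i\rVert^2+\rho_i^2\le\Rho^2$, as in~(\ref{eq:kinda-pythagoras}). Lemma~\ref{lem:asynch-diameter-included} is only claimed to follow analogously to Lemma~\ref{lem:synch-diameter-included}. Its proof needs just three facts: $m_i$ lies in the safe area (Lemma~\ref{lem:asynch-ballmidpoint-in-safearea}), the safe area lies in $\ballOf{C^{r-1}}$ (Lemma~\ref{lem:asynch-safe-in-trueball}), and the minimality of the smallest enclosing ball. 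All three hold here, so I take it as given.

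Adding the two inequalities cancels $\rho_i^2$ and yields $\lVert m_i-q\rVert^2+\lVert m_i\rVert^2\le\Rho^2$. The parallelogram identity rewrites the left side as $2\lVert m_i-q/2\rVert^2+\lVert q\rVert^2/2$. Hence $\lVert m_i-q/2\rVert\le\Rho/\sqrt{2}$ for every correct $i$. The new correct inputs $v_i^r=m_i$ therefore all lie in $\ballOf{q/2,\Rho/\sqrt{2}}$, so the radius of $\ballOf{C^r}$ is at most $\Rho/\sqrt{2}$.

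The only genuinely asynchronous point is that the intersection argument must not use correct-only vectors. I handle this through Lemma~\ref{lem:asynch-all-safearea-intersect}, which rests on the Gather common core rather than on broadcast validity. Everything else is convexity, and every safe area still lies in $\convexHull(C^{r-1})$ by Lemma~\ref{lem:asynch-safearea-nonempty}. The main thing to double-check is that $q$ is a point common to the safe areas themselves and not merely to their balls; since the safe areas are contained in the balls, the former suffices.
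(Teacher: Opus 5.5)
Your proposal is correct and is the argument the paper intends. The paper omits this proof as analogous to Lemma~\ref{lem:synch-contraction-rate}, and you replay that proof with Lemmas~\ref{lem:asynch-all-safearea-intersect} and~\ref{lem:asynch-diameter-included} in place of their synchronous counterparts, obtaining $\lVert m_i-q/2\rVert\le\Rho/\sqrt{2}$ for every correct $i$. Choosing $q$ in the intersection of the safe areas rather than of their balls is harmless, as you note.
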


We omit the proof of this lemma here, as it is analogous to the proof of Lemma~\ref{lem:synch-contraction-rate}. The contraction lemma leads to the following corollary, which concludes the proof of Theorem~\ref{thm:asynch-contraction}.

\begin{corollary}
    Algorithm~\ref{alg:asynch-ballmidpoint} converges in $O\left(\log_2 \bigl(\frac{1}{\varepsilon}\cdot c\cdot\radius(\ballOf{C})\bigr)\right)$ such that the output vectors satisfy $\varepsilon$-Agreement.
\end{corollary}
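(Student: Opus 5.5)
The corollary follows by combining Lemma~\ref{lem:asynch-contraction-rate} with the correctness lemmas, exactly mirroring the synchronous corollary. Let $\Rho_r \coloneqq \radius(\ballOf{C^r})$. The first step is to make the per-iteration contraction precise. The proof of Lemma~\ref{lem:synch-contraction-rate}, transferred via Lemmas~\ref{lem:asynch-safe-in-trueball}, \ref{lem:asynch-diameter-included} and \ref{lem:asynch-all-safearea-intersect}, exhibits a point $q/2$ with $\lVert v_i^r - q/2\rVert \le \Rho_{r-1}/\sqrt{2}$ for every correct $i$. Here $q$ is taken in the intersection of all balls $\ballOf{\safe_{n-2t+k_i}(V_i^{r-1})}$, and the intersection of the safe areas already lies inside it. Hence $\ballOf{q/2, \Rho_{r-1}/\sqrt{2}}$ is an enclosing ball of $C^r$. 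Since the smallest enclosing ball has minimal radius, this gives $\Rho_r \le \Rho_{r-1}/\sqrt{2}$.

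The second step is to iterate this bound. Induction on $r$ gives $\Rho_r \le 2^{-r/2}\,\Rho_0$, where $\Rho_0 = \radius(\ballOf{C})$. Choose $R \ge \log_{\sqrt{2}}\bigl(\frac{1}{\varepsilon} c\,\radius(\ballOf{C})\bigr) = 2\log_2\bigl(\frac{1}{\varepsilon} c\,\radius(\ballOf{C})\bigr)$. This is exactly the loop bound $\log_2(\cdot)/\log_2(\sqrt{2})$ in Algorithm~\ref{alg:asynch-ballmidpoint}. With this choice, $\Rho_R \le \Rho_0/(c\,\Rho_0/\varepsilon) = \varepsilon/c \le \varepsilon$, using $c \ge 1$. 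So all correct outputs lie in a closed ball of radius $\varepsilon$, which is $\varepsilon$-Agreement. The iteration count is $2\log_2(\cdot) = O\bigl(\log_2(\frac{1}{\varepsilon} c\,\radius(\ballOf{C}))\bigr)$.

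The third step is to check that the induction is legitimate in every iteration. This requires that the hypotheses of the lemmas hold at each round. By Lemmas~\ref{lem:asynch-safearea-nonempty} and \ref{lem:asynch-ballmidpoint-in-safearea}, each $v_i^r$ lies in $\convexHull(C^{r-1})$, so it is a well-defined vector. Every correct node also completes each Gather invocation, since Gather terminates for $n > 3t$. Consequently $C^r$ is a legitimate set of correct inputs for round $r+1$, and the asynchronous analysis reapplies to it. Because convex hulls are nested, $\convexHull(C^r) \subseteq \convexHull(C)$, so Containment is preserved as well.

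The main obstacle is the first step: the contraction lemma is phrased as a ``rate'', and it has to be converted into the radius recurrence. The one subtle point is to confirm that the single point $q$ used for all correct nodes is available asynchronously even though the safe-area parameters $n-2t+k_i$ differ between nodes. Lemma~\ref{lem:asynch-all-safearea-intersect} supplies this, since the global safe area $\safe_{n-2t}(G^{r-1})$ is contained in every local safe area. Once that is in place, the remaining steps are routine arithmetic.
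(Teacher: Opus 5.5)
Your proposal is correct and follows the same route as the paper. The paper just observes that the contraction lemma (Lemma~\ref{lem:asynch-contraction-rate}, proved analogously to the synchronous case) shrinks the radius of the smallest enclosing ball of the correct vectors by a factor $\frac{1}{\sqrt{2}}$ per iteration, so $\log_{\sqrt{2}}\bigl(\frac{1}{\varepsilon}\cdot c\cdot\radius(\ballOf{C})\bigr)$ iterations suffice. Your write-up spells out what the paper leaves implicit: the minimality step that turns the ball around $q/2$ into $\Rho_r\le\Rho_{r-1}/\sqrt{2}$, the use of Lemma~\ref{lem:asynch-all-safearea-intersect} to get a common $q$ despite the differing $k_i$, and (tacitly, through your choice $R\ge\log_{\sqrt{2}}(\cdot)$) that the loop bound is rounded up, which matters when $c$ is close to $1$.
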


\subsection{Tightness of the contraction bound for Algorithm~\ref{alg:asynch-ballmidpoint}}\label{sec:tightness-asynch}

In this section, we adapt the construction from Theorem~\ref{thm:lower-bound-synch} to show that the contraction rate is also asymptotically tight for the asynchronous algorithm. 

\begin{figure}[h]
    \centering
    \includegraphics[width=\textwidth]{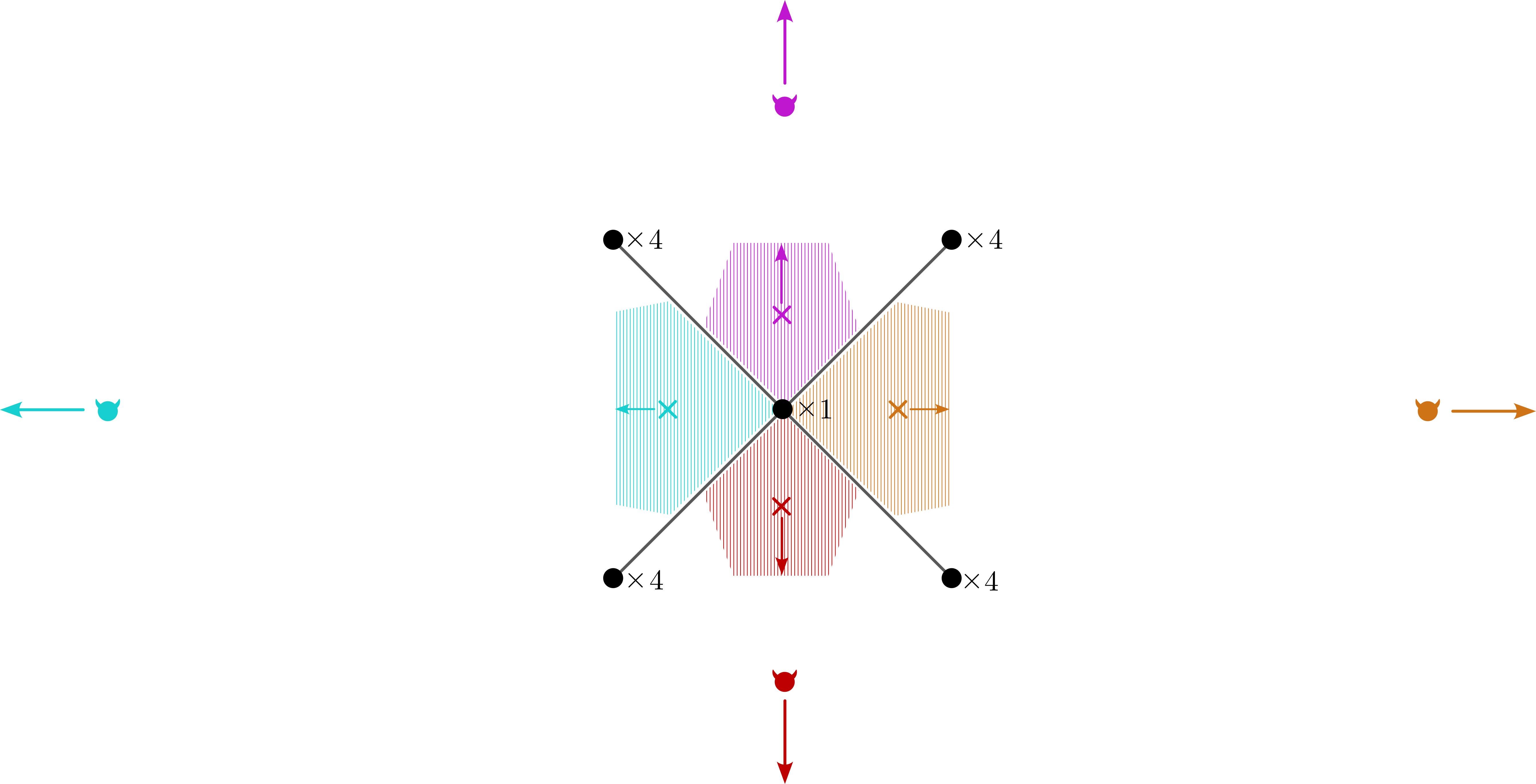}
    \caption{There are $17$ correct nodes and $4$ Byzantine nodes in this construction. $16$ correct vectors form a square, with one correct node in the middle, and all correct nodes will accept these vectors. Additionally, each group of $4$ correct nodes accepts one of the four Byzantine vectors. Observe that we could have omitted the correct vector in the middle of the square in this construction. The Byzantine vectors are assumed to be placed far away from the correct vectors in the marked direction. The further the placement of the Byzantine vectors, the closer the next input vector of the correct nodes to the midpoint of a side of the square. The safe areas corresponding to the four different views are marked in the figure. Note that new input vectors of the correct nodes for the next round are also forming a square from points that are close to the midpoints of the sides. Thus, Byzantine nodes can apply an analogous attack in the next iteration. }
    \label{fig:lower-bound-midpoint}
\end{figure}

\begin{theorem}\label{thm:lower-bound-asynch}
    The contraction rate of Algorithm~\ref{alg:asynch-ballmidpoint} is at least $\frac{1}{\sqrt{2}}-\delta$, where $\delta$ is an arbitrarily small constant.
\end{theorem}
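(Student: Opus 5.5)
We adapt the construction from the proof of Theorem~\ref{thm:lower-bound-synch}, following Figure~\ref{fig:lower-bound-midpoint}. We work in $\mathbb{R}^2$; for larger $d$ all other coordinates are set to $0$. Take $t=f\ge 4$ Byzantine nodes and enough correct nodes that $n>\max(3,d+2)t$. Place the correct vectors in four equal groups of at least $t$ nodes each, one group on each vertex $\bigl(\pm\frac{1}{\sqrt2},\pm\frac{1}{\sqrt2}\bigr)$ of the square. Any remaining correct nodes go to the origin. The \seb of the correct vectors then has radius $1$. Four Byzantine nodes use the inputs $(\pm\alpha,0)$ and $(0,\pm\alpha)$ for a large $\alpha$; the remaining Byzantine nodes stay silent in the Gather.

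The adversary controls the scheduling, so it can arrange the Gather executions as follows. The common core consists of $n-t$ correct vectors. Each of the four groups of correct nodes additionally accepts exactly one of the four Byzantine vectors, namely the one pointing towards the side of the square "owned" by that group, as in the figure. The consistency property is respected, since each Byzantine vector is accepted only with a single value. In each view, node $i$ has $|V_i^{r-1}|=n-t+1$, so $k=1$ and it computes $\safe_{n-2t+1}$, which removes any $t$ vectors.

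\textbf{Main step: computing these safe areas.} Removing $t\ge$ group size would collapse the hull to three vertices plus the Byzantine point. So the intersection over all removals is the intersection of the hulls obtained by deleting one whole vertex group, possibly also the Byzantine point or origin nodes. I expect each local safe area to be the region of the square bounded by the two diagonals and the side facing the Byzantine point $(\alpha,0)$. Because the hull with vertex $(\alpha,0)$ extends slightly beyond that side, the region is a triangle slightly truncated near that side. As $\alpha\to\infty$ it converges to the triangle with vertices $O$, $\bigl(\frac{1}{\sqrt2},\frac1{\sqrt2}\bigr)$ and $\bigl(\frac1{\sqrt2},-\frac1{\sqrt2}\bigr)$. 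This is the same geometry as in the synchronous proof, so the \seb midpoints should be $\bigl(\pm\frac{\alpha}{\sqrt2\alpha+2},0\bigr)$ and $\bigl(0,\pm\frac{\alpha}{\sqrt2\alpha+2}\bigr)$. Checking this precisely, i.e.\ verifying which removal set is binding and that the group sizes make the relevant hulls exactly those triangles, is the obstacle. Concretely, I will show that the binding constraints are the hull of the three vertex groups other than a given one, together with the hull formed from the two adjacent vertices and the Byzantine point. Correct nodes at the origin (if any) only accept core vectors, giving safe area $\{O\}$; if needed, no nodes are placed there, which the figure caption notes is possible.

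The new correct vectors then form a square of circumradius $\frac{\alpha}{\sqrt2\alpha+2}=\frac1{\sqrt2}-\frac1\alpha+O(\alpha^{-2})$, which exceeds $\frac1{\sqrt2}-\delta$ for $\alpha$ large. The configuration is the original one rotated by $45^\circ$ and scaled, so in the next iteration the Byzantine nodes rotate their vectors to $(\pm\alpha,\pm\alpha)$ and the attack repeats. This gives contraction at least $\frac1{\sqrt2}-\delta$ in every iteration.
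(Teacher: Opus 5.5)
Your proposal is essentially the paper's proof: groups of $f=t$ correct nodes on the vertices of the square, four Byzantine vectors on the axes, and a Gather schedule in which each group accepts one Byzantine vector so that $k=1$; the safe areas are then identical to the synchronous ones, and the same expansion and $45^\circ$ rotation argument finishes it (the groups must have size exactly $t$, not ``at least $t$'': with larger groups no removal of $t$ vectors empties a vertex, and emptying a vertex must use up all $t$ removals so that $(\alpha,0)$ is never removed along with it). When you carry out your main step, you will find that the lines from $\bigl(-\frac{1}{\sqrt2},\pm\frac{1}{\sqrt2}\bigr)$ to $(\alpha,0)$ clip the two outer corners, so the smallest enclosing ball of the safe area passes through $O$ and $\bigl(\frac{1}{\sqrt2},\pm\bigl(\frac{1}{\sqrt2}-\frac{1}{\alpha+1/\sqrt2}\bigr)\bigr)$ and its center lies slightly beyond $\frac{\alpha}{\sqrt2\alpha+2}$; that formula, which you share with the paper, is therefore not exact, but the error only increases the new radius, which is still $\frac{1}{\sqrt2}-\frac{1}{\alpha}+O(\alpha^{-2})$, so the conclusion stands.
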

\begin{proof}
    Consider a construction in $\mathbb{R}^2$ with $n-f\ge4f$ correct and $f$ Byzantine nodes, where $f\ge4$. Note that $n\ge 5f>3f$. For larger $d$, as in the synchronous case, we can consider vectors where all but two dimensions are $0$. We place input vectors of groups of $f$ correct nodes in the vertices of a square, with coordinates $\left(\frac{1}{\sqrt{2}},\frac{1}{\sqrt{2}}\right), \left(-\frac{1}{\sqrt{2}},\frac{1}{\sqrt{2}}\right), \left(\frac{1}{\sqrt{2}},-\frac{1}{\sqrt{2}}\right)$ and $\left(-\frac{1}{\sqrt{2}},-\frac{1}{\sqrt{2}}\right)$. Note that there are examples, where no correct nodes remain after this assignment. We therefore place the remaining $n-5f$ correct nodes at the origin, if applicable. Also in this construction, the \seb of the square is $1$. 
    
    For the adversarial strategy, as in the synchronous case, let four Byzantine nodes choose their inputs to be $(\alpha,0)$, $(-\alpha,0)$, $(0,\alpha)$ and $(0,-\alpha)$, where $\alpha>\max(\frac{1}{\delta},2)$. The remaining Byzantine nodes do not broadcast any input vectors, but may participate in the broadcast of other vectors. Further, let the adversarial scheduling divide the correct nodes into five views: four groups of $f$ correct nodes accept one Byzantine input vector, but not the three others; The remaining $n-5f$ correct nodes do not accept any Byzantine vectors. See Figure~\ref{fig:lower-bound-midpoint} for an example with $21$ nodes. 

    The above groups of $f$ nodes receive $n-f+1$ vectors and compute safe areas using subsets of $n-2f+1$ vectors. These safe areas are identical to the safe areas in the proof of Theorem~\ref{thm:lower-bound-synch}, and they all intersect in the origin. We refer to these safe areas as the main safe areas. The areas are bounded by the diagonals of the square, and, for $\alpha \rightarrow \infty$, they converge to triangles bounded by one side and the two diagonals of the square. The potentially remaining group of $n-5f$ correct nodes has a safe area that contains only the origin.

    The rest of the analysis, where the choice of $\alpha$ and the repeated construction in the following iteration is explained, is analogous to the proof of Theorem~\ref{thm:lower-bound-synch}.
\end{proof}

\section{Local round estimation}\label{sec:round-estimation}

Similar to~\cite{mendes2015multidimensional}, we apply the witness technique to estimate the number of rounds at the start of the algorithm. In the following, we differentiate between the synchronous and the asynchronous settings.

\paragraph*{Synchronous communication.}
The goal of the round estimation is to receive sets from other nodes that they use to compute the safe areas locally. By choosing the set whose \seb of the safe area is the largest, the nodes in the worst case over-estimate the radius of the \seb of the following iteration, but the value is still below the \seb of the initial inputs of the correct nodes. Algorithm~\ref{alg:synch-round-estimation} presents this idea as pseudocode.

\begin{algorithm}[tbh]
\caption{Preprocessing: Round estimation, synchronous}
\label{alg:synch-round-estimation}
\begin{algorithmic}[1]
    \State Each node $i \in [n]$ with input vector $v_i^0$ executes the following code:
    \Indent
        \State Broadcast $v_i^0$
        \State Receive $v_j^0$ from all nodes and store them in set $S_i$
        \State Broadcast $S_i$
        \State Accept a vector $v_j^0$, if it is in at least $n-t$ received sets $S_j$. \label{line:synch-round-accept} \Comment{all $n-f$ correct vectors will be accepted by $i$}
        \State Create a new set $T_i$ that contains all vectors that have been received in at least $n-2t$ sets $S_j$.
        \State Broadcast $T_i$
        \State Accept other sets $T_k$ if $T_k$ contains all accepted vectors from Line~\ref{line:synch-round-accept} \label{line:synch-round-accept-T}
        \State Set $W = \{\ballMid(\safe_{n-t}(T_k)): T_k\ \text{accepted}\}$
        \State Set $v_i^\prime =  \ballMid(\safe_{n-t}(W))$
        \State Set $rounds_i = \left\lceil \log_2(\frac{1}{\varepsilon}\cdot \radius(\ballOf{W})) \right\rceil$
        \State\Return $(v_i^\prime,rounds_i)$
    \EndIndent
\end{algorithmic}
\end{algorithm}

The idea of Algorithm~\ref{alg:synch-round-estimation} is to make sure that the sets $T_j$ that are used to estimate the radius are only accepted, if they contain all correct vectors.

\begin{lemma}\label{lem:correctness-synch-round-estimation}
    Algorithm~\ref{alg:synch-round-estimation} guarantees that every accepted set $T_j$ by a correct node contains all correct vectors. Additionally, every correct node will accept every set $T_j$ coming from a correct node. 
\end{lemma}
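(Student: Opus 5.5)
I will prove the two claims separately, using only the synchrony of rounds and the threshold rules of Algorithm~\ref{alg:synch-round-estimation}.

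\textbf{First claim: every accepted set contains all correct vectors.} I would first show that every correct vector $v_j^0$ is accepted in Line~\ref{line:synch-round-accept} by every correct node. In the synchronous model, a correct node $j$ broadcasts $v_j^0$ in the first round, and every correct node $k$ receives it and includes it in $S_k$. Each correct node $i$ then receives at least $n-f\ge n-t$ sets $S_k$ from correct nodes, and all of them contain $v_j^0$, so the threshold of $n-t$ sets is met. Hence the set of vectors accepted by $i$ in Line~\ref{line:synch-round-accept} contains $C$. The acceptance rule in Line~\ref{line:synch-round-accept-T} only admits a set $T_k$ if it contains all vectors accepted by $i$ in that line. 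Therefore every accepted $T_k$, including one sent by a Byzantine node, contains all correct vectors.

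\textbf{Second claim: correct nodes accept all correct nodes' sets $T_j$.} Let $j$ and $i$ be correct, and let $v$ be any vector that $i$ accepted in Line~\ref{line:synch-round-accept}. Then $v$ appears in at least $n-t$ of the sets $S_k$ received by $i$. At most $t$ of these come from Byzantine nodes, so at least $n-2t$ of them come from correct nodes $k$. A correct $k$ broadcasts the same $S_k$ to everyone, so $j$ also receives these at least $n-2t$ sets containing $v$. Hence $v\in T_j$. Since this holds for every vector accepted by $i$, the set $T_j$ passes the test of Line~\ref{line:synch-round-accept-T} at $i$.

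\textbf{Main obstacle.} The delicate step is the second claim. It needs correct nodes' sets $S_k$ to be seen identically by all correct nodes, which holds because correct nodes send the same broadcast to all recipients, and $S_k$ as broadcast by $k$ is fixed within the round. One subtlety is a Byzantine node sending two different versions of its own vector, so that both versions could pass the threshold. Both versions would then satisfy the $n-2t$ rule at $j$ as well, so the inclusion argument still goes through, since it is per vector. I would also state explicitly that $f\le t$, so that the count $n-f\ge n-t$ in the first claim holds.
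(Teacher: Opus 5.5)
Your proposal is correct and follows the paper's proof essentially step for step. For the first claim, every correct vector lies in all $n-f\ge n-t$ correct sets $S_k$, so it is accepted, and the acceptance filter for the $T_k$ then forces every accepted set (even a Byzantine one) to contain it. For the second claim, any vector a correct node accepts lies in at least $n-2t$ correct sets $S_k$, which every correct node receives, so it enters every correct $T_j$. Your explicit use of $f\le t$ and your remark on equivocating Byzantine vectors are small refinements that the paper leaves implicit.
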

\begin{proof}
    Consider a correct node $i$. We start by proving that $T_k$ that $i$ accepts in Line~\ref{line:synch-round-accept-T} contains all correct vectors. Observe that every initially broadcast correct vector will be forwarded by every correct node, and since $n-f\ge n-t$, it also ends up in all correct sets $S_j$. Since every correct node receives every correct set $S_j$, every correct node will accept every correct vector. This ensures that, even if a set $T_b$ from a Byzantine adversary has been accepted, this set must have contained all $n-f$ correct vectors. 
    
    We now proceed with the proof that all correct sets $T_k$ will be accepted. Assume that some correct node $i$ accepted a vector $v_\ell^0$. Then, $v_\ell^0$ must have appeared in at least $n-t$ sets $S_m$ that $i$ received. At least $n-2t$ of these sets $S_m$ are from correct nodes and have been received by every correct node. Thus, all correct nodes must have added $v_\ell^0$ to their set $T_k$.
\end{proof}

In the following, we will use Lemma~\ref{lem:correctness-synch-round-estimation} to show that Algorithm~\ref{alg:synch-round-estimation} produces a suitable round estimation for every correct node.

\begin{theorem}
    Let $C^\prime$ be the set of output vectors $v_i^\prime$ computed by all correct nodes in Algorithm~\ref{alg:synch-round-estimation}. Then, no correct node underestimates the radius of $\ballOf{C^\prime}$, i.e., $\radius(\ballOf{W}) \ge \radius(\ballOf{C^\prime})$. Moreover, no node overestimates the radius of the ball of the correct input vectors $\ballOf{C}$, i.e., $\radius(\ballOf{C}) \ge \radius(\ballOf{W})$. Additionally $v_i^\prime$ are in the convex hull of all correct vectors: $v_i^\prime \in \convexHull(C)$.
\end{theorem}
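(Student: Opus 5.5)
Three claims must be shown: (i) $\radius(\ballOf{W})\ge\radius(\ballOf{C'})$, (ii) $\radius(\ballOf{C})\ge\radius(\ballOf{W})$, and (iii) $v_i'\in\convexHull(C)$. Everything rests on Lemma~\ref{lem:correctness-synch-round-estimation}.

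\textbf{Containment in $\convexHull(C)$.} By that lemma, every set $T_k$ accepted by a correct node contains all $n-f$ correct vectors. Hence among the $(n-t)$-subsets of $T_k$ there is one consisting only of correct vectors. By the argument of Lemma~\ref{lem:synch-safearea-nonempty}, $\safe_{n-t}(T_k)$ is then non-empty and lies in $\convexHull(C)$. This needs $|T_k|\le n$ and one vector per sender, which I take from the construction; I also use $n>\max(3,d+1)t$. By Lemma~\ref{lem:synch-ballmidpoint-in-safearea}, every element of $W_i$ lies in $\safe_{n-t}(T_k)\subseteq\convexHull(C)$, so $W_i\subseteq\convexHull(C)$.

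Next, $v_i'=\ballMidOf{\safe_{n-t}(W_i)}$. If $|W_i|\ge n-t$, then $\safe_{n-t}(W_i)\subseteq\convexHull(W_i)\subseteq\convexHull(C)$ by convexity. If $|W_i|$ is smaller, I read the safe area as $\convexHull(W_i)$. Since $\convexHull(C)$ is convex, in either case Lemma~\ref{lem:synch-ballmidpoint-in-safearea} gives $v_i'\in\convexHull(C)$.

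\textbf{Upper bound.} From $W_i\subseteq\convexHull(C)\subseteq\ballOf{C}$, the ball $\ballOf{C}$ encloses $W_i$. Minimality of the smallest enclosing ball then gives $\radius(\ballOf{W_i})\le\radius(\ballOf{C})$.

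\textbf{Lower bound.} This is the step I expect to be the main obstacle, since different correct nodes $i,j$ may accept different Byzantine sets and so get $W_i\ne W_j$. I will argue through the common part. By the second half of Lemma~\ref{lem:correctness-synch-round-estimation}, every correct node accepts every correct $T_k$. So $W^\ast:=\{\ballMid(\safe_{n-t}(T_k)) : k\in Corr\}\subseteq W_i$ for all correct $i$, and $|W^\ast|\ge n-f\ge n-t$.

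For any $(n-t)$-subset $I$ of $W_i$, a counting argument is needed: it must contain enough of $W^\ast$. At best this shows $\safe_{n-t}(W_i)\subseteq\convexHull(W_i)\subseteq\convexHull(W^\ast\cup\text{Byzantine points})$, which is not immediately enough. The cleaner route I would try first is to show that all $v_j'$ lie in $\convexHull(W_i)$ for every fixed correct $i$. If that holds, then $\ballOf{W_i}\supseteq\convexHull(W_i)\supseteq C'$, and hence $\radius(\ballOf{W_i})\ge\radius(\ballOf{C'})$.

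To get $v_j'\in\convexHull(W_i)$, I would observe that $\safe_{n-t}(W_j)$ is contained in the convex hull of every $(n-t)$-subset of $W_j$, in particular of an $(n-t)$-subset of $W^\ast\subseteq W_i$. This requires $|W^\ast|\ge n-t$, which holds. Then $v_j'\in\convexHull(W^\ast)\subseteq\convexHull(W_i)$, again using Lemma~\ref{lem:synch-ballmidpoint-in-safearea}.

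The key subtlety is exactly this use of $|W^\ast|\ge n-t$, which requires treating points of $W$ with multiplicity per accepted $T_k$. I would make this multiset convention explicit and check it against the definition of $\safe_{n-t}$.
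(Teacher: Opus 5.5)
Your proof is correct. For $\radius(\ballOf{C})\ge\radius(\ballOf{W})$ and $v_i'\in\convexHull(C)$ it is the paper's argument: every accepted $T_k$ contains all correct vectors, so $\safe_{n-t}(T_k)$ lies in the hull of an all-correct $(n-t)$-subset, and hence every point of $W$ lies in $\convexHull(C)$.

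You genuinely differ on the lower bound $\radius(\ballOf{W})\ge\radius(\ballOf{C'})$. The paper notes $v_i'\in\convexHull(W)$ and concludes $\convexHull(C')\subseteq\convexHull(W)$, which implicitly treats $W$ as the same set at every correct node. But a Byzantine $T_b$ may be accepted by one correct node and rejected by another. So in general $W_i\neq W_j$, and $v_j'\in\convexHull(W_j)$ does not by itself put $v_j'$ into $\convexHull(W_i)$.

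Your common part $W^\ast$ closes exactly this gap. It consists of the $n-f\ge n-t$ midpoints from correct senders; these are identical at every correct receiver and accepted by every correct node, by the second half of Lemma~\ref{lem:correctness-synch-round-estimation}. Picking an $(n-t)$-subset of indices inside $W^\ast$ gives $v_j'\in\safe_{n-t}(W_j)\subseteq\convexHull(W^\ast)\subseteq\convexHull(W_i)$ for all correct $i,j$. Hence $C'\subseteq\ballOf{W_i}$, and in fact even $\radius(\ballOf{W^\ast})\ge\radius(\ballOf{C'})$.

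The price is making the indexed (per-sender) reading of $W$ explicit. The paper's own count $n\ge|W|\ge n-f$ and the index-based definition of $\safe_\ell$ already presuppose this reading. Under it, the case $|W_i|<n-t$ that you hedge on never occurs.

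Two small remarks. First, you should still record why $\safe_{n-t}(W_i)\neq\varnothing$; the paper does this via Helly with $n-f\le|W_i|\le n$ and $n>\max(3,d+1)t$. Both the definition of $v_i'$ and your appeals to Lemma~\ref{lem:synch-ballmidpoint-in-safearea} depend on it. Second, you assume an accepted Byzantine $T_k$ carries at most one vector per sender, so that $\safe_{n-t}(T_k)$ is nonempty. The paper relies on the same assumption, so this is not a defect specific to your argument.
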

\begin{proof}
    We first observe that $W$ always contains at least $n-f$ sets $T_k$ (Lemma~\ref{lem:correctness-synch-round-estimation}). Moreover, $\safe_{n-t}(W)\neq \varnothing$, since $n\ge|W|\ge n-f$ and by a Helly-type applied similarly to Section~\ref{sec:synch-MBAA}. Note that $v_i^\prime$ is inside $\convexHull(W)$. Therefore, $\convexHull(C^\prime)\subseteq\convexHull(W)$, from which $\radius(\ballOf{W}) \ge \radius(\ballOf{C^\prime})$ follows due to the minimality of $\ballOf{C^\prime}$. 

    For the second part, we show $\radius(\ballOf{C}) \ge \radius(\ballOf{W})$. The result from Lemma~\ref{lem:correctness-synch-round-estimation} says that all accepted sets $T_k$ by a correct node $i$ contain all correct input vectors. Thus, $\safe_{n-t}(T_k)$ is always inside $\convexHull(C)$, as one of the subsets of $n-t$ vectors used in the safe area computation consists only of correct vectors. Therefore, $\ballMid(\safe_{n-t}(T_k))\in\convexHull(C)$ Since this holds for every vector in $W$, $\convexHull(W) \subseteq \convexHull(C)$ and thus $\radius(\ballOf{C}) \ge \radius(\ballOf{W})$, and therefore the claim follows. 

    Finally, also $v_i^\prime \in \convexHull(C)$ follows, since all points in $W$ are also in $\convexHull(C)$ .

\end{proof}

Note that the contraction rate of the algorithm may be bigger than $\frac{1}{\sqrt{2}}$ in this preprocessing step. This step however only requires a small constant number of rounds which would not significantly influence the total convergence time of Algorithm~\ref{alg:synch-ballmidpoint}.

When this preprocessing step is used with Algorithm~\ref{alg:synch-ballmidpoint}, some nodes may terminate earlier than other. In this case, the nodes can broadcast their final vector and a halt message. This vector can be used as the input of the halted node for the following rounds, together with a set $S$, which coincides with the local set of the node.

\paragraph*{Asynchronous communication.}
In the asynchronous case, we need a stronger communication primitive to make sure that the correct nodes may rely on the sets of vectors that they have received from the Byzantine nodes. To this end, we will replace simple broadcast by reliable broadcast~\cite{BrachaRB}. Algorithm~\ref{alg:asynch-round-estimation} presents this strategy as a pseudocode.

\begin{algorithm}[tbh]
\caption{Preprocessing: Round estimation, asynchronous}
\label{alg:asynch-round-estimation}
\begin{algorithmic}[1]
    \State Each node $i \in [n]$ with input vector $v_i^0$ executes the following code:
    \Indent
        \State Reliably broadcast $v_i^0$
        \State Reliably receive $v_j^0$ from all nodes and store them in set $S_i$
        \State Reliably broadcast $S_i$
        \State Wait until $n-t$ sets $S_j$ with $|S_j|\ge n-t$ have been reliably received and accepted. A set is considered accepted if every vector in $S_j$ has been received reliably. 
        \State Set $W = \{\ballMid(\safe_{n-2t+k_j}(S_j)): S_j\ \text{accepted}\ k_j = |S_j|-(n-t)\}$
        \State Set $v_i^\prime =  \ballMid(\safe_{n-2t}(W))$
        \State Set $rounds_i = \left\lceil \log_2(\frac{1}{\varepsilon}\cdot \radius(\ballOf{W})) \right\rceil$
        \State\Return $(v_i^\prime,rounds_i)$
    \EndIndent
\end{algorithmic}
\end{algorithm}

Observe that Algorithm~\ref{alg:asynch-round-estimation} outputs a radius estimate and a new input vector that is used to execute Algorithm~\ref{alg:asynch-ballmidpoint}.

\begin{theorem}
    Let $C^\prime$ be the set of output vectors computed by all correct nodes in Algorithm~\ref{alg:asynch-round-estimation}. Then, no correct node underestimates the radius of $\ballOf{C^\prime}$, i.e., $\radius(\ballOf{W}) \ge \radius(\ballOf{C^\prime})$. Moreover, $\radius(\ballOf{C}) \ge \radius(\ballOf{W})$. Additionally, $v_i^\prime \in \convexHull(C)$ holds.
\end{theorem}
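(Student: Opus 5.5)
I will mirror the proof of the synchronous round-estimation theorem, replacing consistent broadcast with the reliability guarantees of Bracha's reliable broadcast. The claim has three parts, and I would prove them in the order: (i) every element of $W$ lies in $\convexHull(C)$; (ii) $v_i^\prime\in\convexHull(C)$ together with $\radius(\ballOf{C})\ge\radius(\ballOf{W})$; (iii) $\radius(\ballOf{W})\ge\radius(\ballOf{C^\prime})$.

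For (i), fix a correct node $i$ and an accepted set $S_j$ with $|S_j|\ge n-t$. Acceptance means every vector in $S_j$ was reliably received. Hence every entry attributed to a correct sender is that sender's true input, and at most one entry per sender exists. So $S_j$ contains at most $t$ Byzantine vectors, and at least $|S_j|-t=n-2t+k_j$ of its vectors are correct. Consequently, one of the subsets of size $n-2t+k_j$ used in $\safe_{n-2t+k_j}(S_j)$ consists only of correct vectors. As in Lemma~\ref{lem:asynch-safearea-nonempty}, this gives $\safe_{n-2t+k_j}(S_j)\subseteq\convexHull(C)$. The same lemma, with $n>\max(3,d+2)t$ and Helly's theorem, shows the safe area is nonempty, so its ball midpoint is well defined. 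By Lemma~\ref{lem:asynch-ballmidpoint-in-safearea}, that midpoint lies in the safe area and hence in $\convexHull(C)$.

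For (ii), (i) gives $\convexHull(W)\subseteq\convexHull(C)\subseteq\ballOf{C}$. Thus $\ballOf{C}$ encloses $W$, and minimality of the smallest enclosing ball yields $\radius(\ballOf{C})\ge\radius(\ballOf{W})$. Next, $W$ has at least $n-t$ elements, one per accepted set (counted as a multiset indexed by sender). Therefore $\safe_{n-2t}(W)$ is nonempty by the same Helly-type argument. By Lemma~\ref{lem:asynch-ballmidpoint-in-safearea}, $v_i^\prime\in\safe_{n-2t}(W)\subseteq\convexHull(W)\subseteq\convexHull(C)$.

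For (iii), the plan is to show that every correct output $v_k^\prime$ lies in $\convexHull(W_i)$, where $W_i$ is the local $W$ of node $i$. The key fact I would establish is that a given sender's set $S_j$ is the same at every correct node, because it is reliably broadcast; hence its associated midpoint is the same point everywhere. Node $k$ computes $v_k^\prime$ inside $\safe_{n-2t}(W_k)$. Since $W_k$ contains at least $n-t$ points, any subset of $n-2t$ of them lies within the convex hull of the points common to $W_k$ and $W_i$, provided the overlap is large enough. The main obstacle is exactly this step. Two correct nodes may accept different collections of $n-t$ sets, so their overlap is only guaranteed to be $n-2t$. I would handle it as follows: $W_k\cap W_i$ has at least $n-2t$ elements, so some subset $I$ of size $n-2t$ used in $\safe_{n-2t}(W_k)$ lies entirely in $W_i$. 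Then $v_k^\prime\in\convexHull(I)\subseteq\convexHull(W_i)$. Taking all $k$ gives $\convexHull(C^\prime)\subseteq\convexHull(W_i)\subseteq\ballOf{W_i}$, and minimality yields $\radius(\ballOf{W_i})\ge\radius(\ballOf{C^\prime})$ for every correct $i$. The overlap step relies on reliable broadcast making accepted sets sender-consistent; if that consistency were not available, the argument would instead need a common-core primitive such as Gather.
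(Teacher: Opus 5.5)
Your proof is correct. Parts (i) and (ii) are the paper's argument in a different order: an accepted $S_j$ contains at least $n-2t+k_j$ correct vectors, so every point of $W$ lies in $\convexHull(C)$. This gives both $\radius(\ballOf{C})\ge\radius(\ballOf{W})$ and $v_i^\prime\in\convexHull(C)$.

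Part (iii) is where you genuinely differ. The paper only notes that each $v_i^\prime$ lies in $\convexHull(W)$, and then concludes $\convexHull(C^\prime)\subseteq\convexHull(W)$. This tacitly treats $W$ as one set shared by all correct nodes. Asynchronously, different correct nodes may accept different collections of $n-t$ sets, so that reasoning only gives $v_i^\prime\in\convexHull(W_i)$ for each node's own $W_i$.

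You supply the missing cross-node step. Consistency of reliable broadcast makes each sender's $S_j$, and hence its midpoint, identical at every correct node that accepts it. The pigeonhole bound $|W_i\cap W_k|\ge n-2t$ (indexed by sender) then matches exactly the subset size in $\safe_{n-2t}(W_k)$, giving $v_k^\prime\in\convexHull(W_i)$ for all correct $i,k$. The paper's version is shorter but does not actually justify this cross-node inclusion. Yours costs one pigeonhole step, proves the bound for every local $W_i$, and shows why using subsets of size $n-2t$ (rather than all of $W$) in computing $v_i^\prime$ is what makes the estimate safe.

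One caveat you share with the paper concerns the Helly argument for $\safe_{n-2t}(W)\neq\varnothing$. It requires $|W|=n-t$ exactly, i.e., each node uses the first $n-t$ accepted sets. With $|W|=n$ one removes $2t$ points and would need $n>2(d+1)t$, which $n>\max(3,d+2)t$ does not guarantee. So your phrase ``at least $n-t$ elements'' should be ``exactly $n-t$''.
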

\begin{proof}
    We start by showing that no node underestimates the radius of $\ballOf{C^\prime}$, that is, $\radius(\ballOf{W}) \ge \radius(\ballOf{C^\prime})$ for the set $W$ of every correct node $i$. Note that $\safe_{n-2t}(W)\neq \varnothing$, because $n\ge|W|\ge n-t$ and by a Helly-type applied similarly to Section~\ref{sec:asynch-MBAA}. Observe next that $v_i^\prime$ is the midpoint of the ball of the safe area of $W$, meaning that $v_i^\prime\in \convexHull(W)$. By definition of $C^\prime$, $v_i^\prime \in C^\prime \subseteq \convexHull(C^\prime)$. Since  $v_i^\prime\in \convexHull(W)\ \forall i\in Corr$,  $\convexHull(C^\prime)\subseteq \convexHull(W)$.  Due to the minimality of $\ballOf{C^\prime}$ we also have $\radius(\ballOf{C^\prime}) \le \radius(\ballOf{W})$.
    
    Next, we show that the computed radius is upper bounded by $\radius(\ballOf{C})$.
    Observe that every $\ballMid(\safe_{n-2t+k_j}(S_j))$ is inside $\ballOf{C}$. This is because all the vectors inside the sets $S_j$ have been reliably received by node $i$, and because $|S_j|\ge n-t$, meaning that there is a subset of $n-2t+k_j$ correct vectors that have been used in the construction of the safe area. Since $\safe_{n-2t+k_j}(S_j)\subseteq \convexHull(C)$, the safe area is also inside $\ballOf{C}$, and with Lemma~\ref{lem:asynch-ballmidpoint-in-safearea} also its midpoint. This holds for all points in $W$. Therefore, $\radius(\ballOf{C}) \ge \radius(\ballOf{W})$.

    Finally, since all points in $W$ are also in $\convexHull(C)$, $\ballMid(\safe_{n-2t}(W)) = v_i^\prime$ is also inside $\convexHull(C)$.
\end{proof}

This theorem shows that no correct node will under-estimate the number of rounds, and thus the $\varepsilon$-Agreement property can be guaranteed. In the implementation, some correct nodes may run the algorithm longer than others. Therefore, once a node terminates, it needs to initiate or join a reliable broadcast with halt messages and only terminate once it reliably receives $n-t$ halt messages. During this time, a node should continue participating in the reliable communication routines with its final value. Since the safe areas make sure to stay inside the convex hull of all correct vectors of each preceding round, the nodes may not converge further, but they will terminate with $\varepsilon$-Agreement.

\section{Conclusion}

In this work, we proposed a novel algorithm for the MBAA problem in the synchronous and asynchronous communication models. We showed that a contraction rate of $\frac{1}{\sqrt{2}}$ can be achieved in both settings, and we presented a $2$-dimensional example, where the contraction bound is tight for the proposed algorithms. Finally, we showed how the nodes can locally estimate the number of rounds based on the given input distribution.

There are several interesting questions that remain for future work. The most natural question is whether the contraction rate can be improved even further. Note that in one dimension, the contraction rate is $1/2$ and it is tight. In two or more dimensions, there is still a gap between $1/2$ and $1/\sqrt{2}$ that should be investigated further. Another question is whether the algorithm can be extended to other communication models, such as the setting of dynamic networks, or the network-agnostic model. The main obstacle for both settings is to guarantee that all safe areas intersect under different network assumptions.

\section*{Acknowledgments}
This work was supported by the German Research Foundation (DFG), SPP 2378: ReNO-2 (grant 511099228), 2025-2029. The authors acknowledge that ChatGPT was used to generate counterexamples that ruled out several unsuccessful proof attempts for the contraction rate.

\bibliography{literature}

\end{document}